\documentclass[a4paper,UKenglish,cleveref, autoref, thm-restate]{lipics-v2021}


\usepackage{amsmath}
\usepackage{amsthm}
\usepackage{amssymb}
\usepackage{mathrsfs}
\usepackage{marvosym}

\usepackage{multirow}
\usepackage{amsfonts}

\usepackage{textcomp}
\usepackage{enumerate}

\usepackage{float}
\usepackage[ruled,linesnumbered]{algorithm2e}
\bibliographystyle{plainurl}

\title{A New Model for Massively Parallel Computation Considering both Communication and IO Cost} 


\author{Hengzhao Ma}{Shenzhen Institutes of Advanced Technology, Chinese Academy of Sciences}{hz.ma@siat.ac.cn}{https://orcid.org/0000-0002-2769-6138}{National Natural Science Foundation of China, grants 61732003, 61832003 and U1811461.}
\author{Xiangyu Gao}{School of Computer Science and Technology, Harbin Institute of Technology \and Shenzhen Institutes of Advanced Technology, Chinese Academy of Sciences}{gaoxy@hit.edu.cn}{}{}

\author{Jianzhong Li\footnote{Corresponding Author}}{Shenzhen Institutes of Advanced Technology, Chinese Academy of Sciences \and School of Computer Science and Technology, Harbin Institute of Technology}{lijzh@hit.edu.cn}{}{}

\author{Tianpeng Gao}{School of Computer Science and Technology, Harbin Institute of Technology \and Shenzhen Institutes of Advanced Technology, Chinese Academy of Sciences}{gaotp@hit.edu.cn}{}{}

\authorrunning{H. Ma, J. Li, X. Gao and T. Gao} 

\Copyright{Hengzhao Ma, Jianzhong Li, Xiangyu Gao and Tianpeng Gao} 

\ccsdesc[100]{Theory of computation → Model development and analysis} 

\keywords{Parallel Computation, IO Cost, Communication Cost} 

\category{} 

\relatedversion{} 





\EventEditors{John Q. Open and Joan R. Access}
\EventNoEds{2}
\EventLongTitle{42nd Conference on Very Important Topics (CVIT 2016)}
\EventShortTitle{CVIT 2016}
\EventAcronym{CVIT}
\EventYear{2016}
\EventDate{December 24--27, 2016}
\EventLocation{Little Whinging, United Kingdom}
\EventLogo{}
\SeriesVolume{42}
\ArticleNo{23}
\newcommand{\tabincell}[2]{\begin{tabular}{@{}#1@{}}#2\end{tabular}} 
\nolinenumbers
\begin{document}

\maketitle

\begin{abstract}
In the research area of parallel computation, the communication cost has been extensively studied, while the IO cost has been neglected. For big data computation, 
the assumption that the data fits in main memory no longer holds, and external memory must be used. Therefore, it is necessary to bring the IO cost into the parallel computation model. 
In this paper, we propose the first parallel computation model which takes IO cost as well as non-uniform communication cost into consideration. Based on the new model, we raise several new problems which aim to minimize the IO and communication cost on the new model. We prove the hardness of these new problems, then design and analyze the approximate algorithms for solving them. 
\end{abstract}

\section{Introduction}
In the research area of parallel computation, the communication cost has always been a  focus of attention, since it may well be the main bottleneck especially for the data intensive tasks such as SQL processing. 
In order to model the communication cost needed in parallel computation and minimize it, a lot of parallel computation models are proposed such as PRAM \cite{Karp1989}, BSP \cite{Valiant1990}, LogP \cite{Culler1996} and so on. In recent years there are several new modes that drew a lot of research attention, which are the Massively Parallel Computation model \cite{Karloff2010}, the Congest model \cite{Peleg2000}, and several variants of them. 

In the above mentioned models, it is assumed that the total data fit in the total main memory of a cluster. However, the assumption is no longer true for big data computation, and the external memory must be used. 
Thus, it is necessary to consider not only the communication cost but also the IO cost in parallel computation on big data. Here the IO cost refers to the cost of transferring the data from external memory to main memory. Unfortunately, existing parallel computation models rarely consider the IO cost. 

In this paper, a new parallel computation model is proposed which considers both communication and IO cost. 
In the following, we  first go over the existing models and discuss their disadvantages, and then introduce the motivation for proposing the new parallel computation model.

\subsection{Existing models and their disadvantages}
For the classical models such as PRAM, BSP and LogP, the disadvantage is that they have a too long history. The PRAM model remains a theoretical tool, but the BSP and LogP model are rarely considered in recent research works. 

In MPC model and its variants, the main consideration is the communication round complexity. There are only a few research results in database theory that care about the \textit{load} of each machine \cite{Afrati2010, Beame2014}, which is related to the IO cost. The situation is similar for the Congest and Congested Clique model. In these models the computation nodes are granted with unlimited computational power and unlimited memory, and the IO cost is ignored. In the next section we will explain the necessity for considering the IO cost in parallel computation.

What is worthy to mention is the topology-aware MPC model \cite{Hu2021}, which is the first model that considers different communication costs between different nodes. This is called topology awareness in \cite{Hu2021}. However, there exists a fatal drawback in that work, that is, the authors only consider the network to be the tree topology.

\subsection{The motivation of a new model}
1. \textbf{Why consider IO.}

The most new and important idea of the proposed model is to consider the IO cost in parallel computation. The reason to include IO cost is simple but somewhat ignored by former researchers, that is, the size of the data can not fit in the total main memory of a practical parallel computation system. For a typical MapReduce cluster \cite{Dean2008}, there may have around several tens of machines, and each machine has up to 128G memory. Hence the total main memory is in size of TB. However, there must be a future that the data to be processed grows into a size of PB or EB, but it is not realistic to have PB size of total main memory. In conclusion, it is necessary to consider external memory and bring IO cost into consideration in parallel computation. 

2. \textbf{Why not restricting external memory size.}

In this new model, we consider limited main memory but unlimited external memory.
In MPC model, the memory of a single machine is usually constrained to be sub-linear in the input size. The rationale is to rule out the trivial solution of transferring all the input to a single machine and solve the problem serially. However, this constraint is not appropriate when external memory is considered. The external memory of computers is becoming cheaper and larger these years. The largest storage capacity of a single disk have reached 20T. With up to 18 SATA ports on a mainboard, it is possible for a single machine to have 300T external memory. Thus, it is possible store the entire input in the external memory of a single machine, and the assumption of unlimited external memory is reasonable.

In another aspect, unlimited external memory is an analog with the classical Turing machine where the length of the working tape is unlimited. Just because of the unlimited length of the working tape of the Turing machine, analyzing the space complexity of algorithms on Turing machine is possible. Similarly, unlimited external memory makes analyzing the IO complexity possible.

In summary, for both realistic and research reasons, we can not make any constraint on the external memory of a single machine. Some may worry that unlimited external memory will lead to trivial single machine solution, but considering the single machine solution may not be optimal in IO cost, we can solve the optimization problem of minimizing IO cost and communication cost to avoid trivial single machine solution.

3. \textbf{Why consider topology awareness on a complete network.}

The idea of topology awareness is almost new \cite{Blanas2020, Hu2021}. The topology aware model captures the non-uniformity of the communication cost inside a cluster, i.e., the communication cost between different nodes are different. In \cite{Hu2021} the underlying communication network is modeled as a tree structure, which brings two disadvantages. First, the tree structure deviates from the mostly considered MPC and Congested Clique model which consider the network to be a complete graph. Second, the communication lower bounds derived in \cite{Hu2021} highly rely on the tree topology of the network. The basic idea is that data must transferred from one side to the other side for any edge in the network to complete the computation. This technique over-emphasizes the importance of the network topology but ignores the inherent complexity of the studied problem. And if the network topology is changed to a complete graph, all the lower bounds given in \cite{Hu2021} will be not applicable. Furthermore, the inherent communication complexity of the studied problem must be revealed to derive new bounds. 

4. \textbf{Why constant number of machines.}

We make another argument about the number of machines. In MPC and Congest model, the number of machines is considered to be a function of the input size \cite{Dean2008}, and thus can be arbitrarily large. However, this may not be realistic. A simple cluster for research purpose may consist of around 10 machines, and an enterprise cluster may have hundreds or thousands of machines.  Even though, thousands of machines are not comparable for data intensive tasks, where the input data is in PB of size and billions of records. Thus, assuming the number of machines is linear in the input size $n$ is not reasonable, even $n^{1-\epsilon}$ is not reasonable neither. The appropriate choice is to set the number of machines to be a constant.

Many theoretical models assume that there are arbitrary number of processors. To make the new model compatible with these theoretical models, it can be considered that there are arbitrarily many virtual processors, and these virtual processors must be mapped to the constant number of physical machines. 
This is natural in applicational environment. We will prove that the new model with constant number of machines and unlimited external memory can simulate the  the PRAM model in a reasonable time. 

5. \textbf{Comparing the proposed model with existing ones.}

In Table \ref{tab:model-comparison} we list some parallel computation models for comparison, together with the new model proposed in this paper, which is called EMPC. The EMPC model is not only the first parallel computation model that involves IO cost, but also has a lot of differences and novelty compared to the existing models. The details of the EMPC model will be elaborated in Section \ref{sec:model}.

\begin{table}
	\resizebox{\textwidth}{!}{
		\begin{tabular}{|c|c|c|c|c|c|c|}
			\hline
			\multirow{2}{*}{Model}& \multirow{2}{*}{Network Topology} & \multirow{2}{*}{\tabincell{c}{Communication\\Non-uniformity}} & \multicolumn{2}{|c|}{\tabincell{c}{Communication Cost\\ Considered?}} & \multirow{2}{*}{\tabincell{c}{IO Cost\\ Considered?}} &
			Restrictions\\ 
			\cline{4-5}
			& & & Amount & Round & &\\
			\hline
			TA-Trees & Trees & Yes & Yes  & - & No & Tree network topology\\
			\hline
			MPC & Complete Graph & No & - & Yes & Yes & Total main memory \\
			\hline
			\tabincell{c}{Congested\\ Clique} & Complete Graph & No & - & Yes & No& Communication bandwidth \\
			\hline
			Congest & General Graph& No & - & Yes & No &Communication bandwidth \\
			\hline
			\tabincell{c}{EMPC\\(This paper)} & Complete Graph & Yes & Yes & - & Yes& \tabincell{c}{Limited main memory\\Unlimited external memory} \\
			\hline
		\end{tabular}
	}
	\caption{Comparison of the models}\label{tab:model-comparison}
\end{table}

\subsection{Declaration of New Research Problems}
By considering both IO cost and non-uniform communication cost, a lot of new research problems emerge on the proposed EMPC model. Here we declare the problems to be solved in this paper.

\textbf{IO-optimality in parallel computation.} By introducing IO cost into the model, we define the IO-optimality which is an analog with the total work optimality in classic analysis of parallel algorithms \cite{Kruskal1990}. Then we choose some state-of-art algorithms on MPC and Congest Clique model, and determine the IO-optimality of them. We will see that some algorithms are optimal, some non-optimal, some even super-optimal. 

\textbf{The data redistribution problem.} This problem arises merely by changing the underlying network topology of \cite{Hu2021} from trees to full connected graphs. We have claimed that the communication lower bounds given in \cite{Hu2021} is not applicable for full connected networks. Using the parallel sorting as an introducing example, we define a new problem called Data Redistribution Problem (DRP). The goal of DRP is to find a best plan to redistribute the data so that the resulted data distribution can produce correct result, and the communication cost of the data redistribution is minimized. We will see that DRP is NP-complete while the IO cost is not yet considered.

\textbf{Minimizing total cost of communication and IO.} The ultimate problem in the new model is to minimize total cost, which is the communication cost plus the IO cost. Also using the parallel sorting as an example, we show that the optimization problem considering both communication and IO cost is in the XP class. Then we give an approximate algorithm and prove the approximation ratio.

\subsection{Organization}\label{subsec:organize}
The rest of this paper is organized as follows. In Section \ref{sec:model} we describe the proposed model in detail and prove the computational power of the new model.  In Section \ref{sec:optmize-io} we  define the IO-optimality in parallel computation and analyze the IO cost of several existing algorithms. Section \ref{sec:optimize-comm} deals with the problem of optimizing the communication cost on the new model. Section \ref{sec:optimize-comm-io} considers the problem of optimizing the communication and IO cost simultaneously, which is the most important problem on the proposed model. Finally Section \ref{sec:conc} concludes this paper and we declare some future works in Section \ref{sec:fwork}.

\section{The Proposed Model}\label{sec:model}
The proposed model is called EMPC, which stands for Enhanced Massively Parallel Computation. Basically the model can be regarded as a MPC model with topology awareness and IO concern. There are constant number of $p$ machines connected with a weighted complete graph, where the weights on the edge represent the communication cost between the machines and are known parameters of the model. We use $C[i,j], 1\le i,j\le p$, to denote the communication cost matrix. Note that $C[i,i]=0$ for $1\le i\le p$. We make no assumption on symmetric communication, i.e., $C[i,j]$ may not equals to $C[j,i]$. See Figure \ref{fig:network-costs} for a demonstration.

There is also a parameter $C_{IO}$ representing the cost of a single IO operation. In this paper we always assume $C_{IO}=1$ for simplicity. The total cost is modeled as the addition of communication and IO costs.

\begin{figure}[H]
	\caption{The network topology and communication costs}\label{fig:network-costs}
	\centering
	\begin{subfigure}{0.3\textwidth}
		\caption{The cluster network}
		\includegraphics[width=0.73\textwidth]{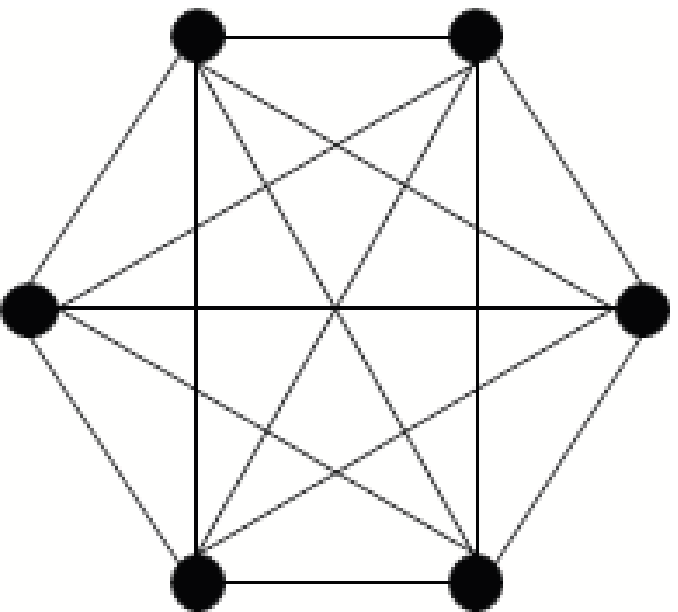}
	\end{subfigure}
	\begin{subfigure}{0.69\textwidth}
		\caption{The communication cost matrix}
		\includegraphics[width=0.8\textwidth]{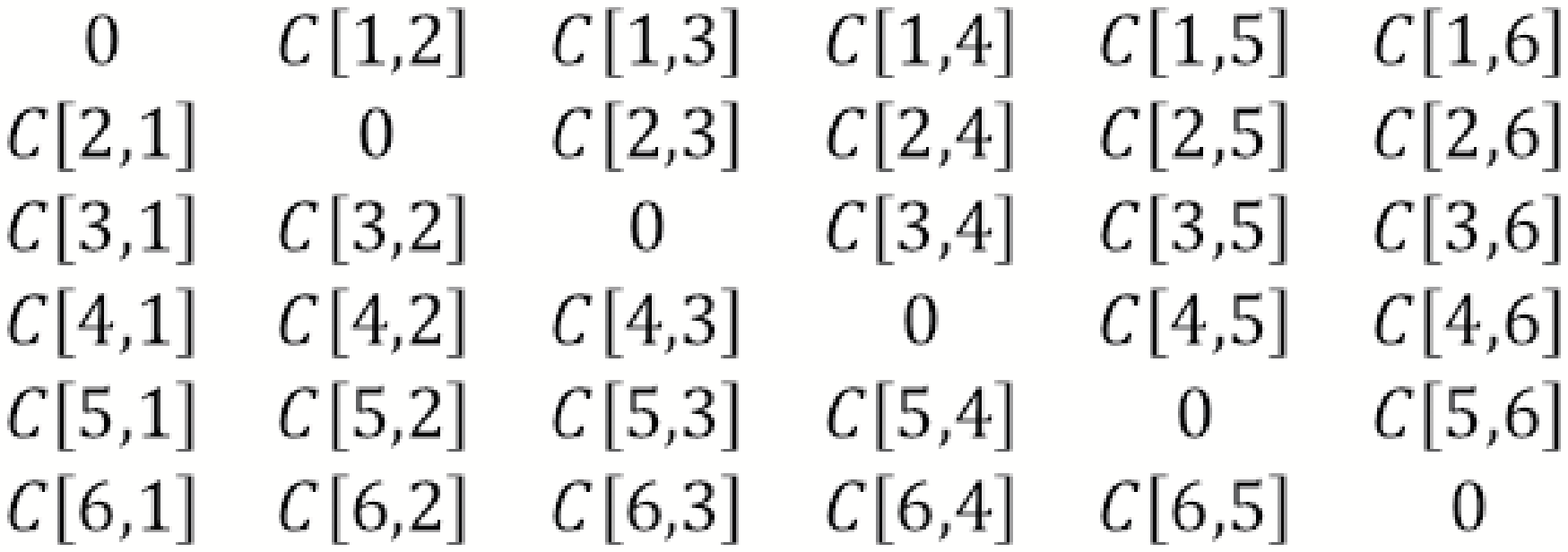}		
	\end{subfigure}

\end{figure}

The computation on EMPC proceeds in synchronous rounds. In each round, the machines first conduct local computation, then exchange messages for the next round of computation. This behavior is the same with the classic MPC model. 
But notice that the local computation on EMPC may involve IO operations. Data must be read to main memory for conducting computation tasks. The communication messages must be exchanged through the main memory but not the external memory. See Figure \ref{fig:framework} for a demonstration of the computation framework.   

\begin{figure}[H]
	\centering
	\caption{The framework of parallel computation on EMPC}\label{fig:framework}
	\includegraphics[width=0.5\textwidth]{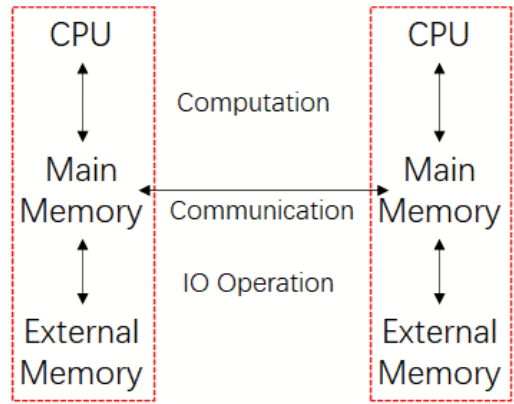}
\end{figure}

The size of the main memory of each machine is limited but the external memory is unlimited. There is no limit on the number and length of communication messages between the machines in each round, since the messages which exceeds the size of the main memory can be spilled to the external memory.

Note that the classical IO complexity usually involves a parameter $B$ which denotes the disk block size. In our model the parameter $B$ is ignored for simplicity. For example, the IO complexity of scanning is $scan(n)=O(n/B)$, but we ignores the parameter $B$ and thus the IO complexity of scanning on EMPC is $scan(n)=O(n)$. This is more like considering the load complexity on the MPC model \cite{Beame2014}.

\subsection{Computational Power of the New Model}\label{subsec:model-power}
In this section we show the computation power of the proposed EMPC model by simulating MPC and PRAM on EMPC.

\begin{theorem}\label{thrm:EMPC-MPC-simulation}
	Any $T(n)$ round algorithm on MPC model using $P(n)$ processors and $M(n)$ memory on each processor can be simulated on EMPC with constant number of $p$ machines in $T(n)$ rounds, using $O(M(n))$ memory and  $O(M(n)\cdot P(n))$ total external memory .
\end{theorem}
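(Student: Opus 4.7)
The plan is to simulate each virtual MPC processor on one of the $p$ physical EMPC machines by parking its state in external memory and bringing it into main memory only when its turn arrives in the local computation phase. First I would distribute the $P(n)$ virtual MPC processors as evenly as possible among the $p$ physical machines, so that each machine is responsible for roughly $P(n)/p$ of them. In the external memory of each physical machine I would allocate, for every assigned virtual processor, a slot holding (a) its current local state of size $M(n)$, and (b) an ``inbox'' with the messages delivered to it during the previous round. The per-machine external memory is then $O(M(n)\cdot P(n)/p)$, summing to the claimed $O(M(n)\cdot P(n))$ across all $p$ machines.

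To simulate a single MPC round by a single EMPC round, each physical machine would loop through its assigned virtual processors during the local computation phase. For each virtual processor I would (i) read its state and inbox from external memory into main memory, (ii) execute the MPC local code, producing an updated state and a list of outgoing messages, (iii) append each outgoing message either to an ``outbox'' slot for a co-located virtual processor, or to an outgoing buffer routed to the physical machine hosting the destination virtual processor, and then write the updated state back to external memory. Because each virtual processor's state has size at most $M(n)$, the main memory in use at any instant is $O(M(n))$, as required. When the local loop completes, the EMPC communication phase delivers all outbound buffers; each receiving machine appends the arrivals to the appropriate inboxes in external memory, and the outboxes and inboxes are swapped to begin the next simulated round.

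Summing everything up, this produces a simulation in exactly $T(n)$ EMPC rounds using $O(M(n))$ main memory per machine and $O(M(n)\cdot P(n))$ total external memory, matching the statement. The hard part, such as it is, will not be a calculation but a correctness subtlety: the MPC semantics requires that a message sent in round $r$ is visible only in round $r{+}1$, even between two virtual processors that happen to be simulated sequentially on the same physical machine within one EMPC local phase. I would handle this by keeping the per-virtual-processor inbox and outbox as two \emph{distinct} external-memory structures and only swapping their roles after the entire local phase has finished on that machine. A secondary and easier point to verify is that intra-machine (co-located) message routing passes only through external memory and incurs no communication cost, which is immediate from the EMPC definition that communication cost is charged only on the inter-machine weighted edges $C[i,j]$ of the cluster graph.
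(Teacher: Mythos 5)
Your proposal follows essentially the same approach as the paper's proof: assign $P(n)/p$ virtual processors to each of the $p$ physical machines, park their states in external memory, and time-share the main memory to simulate them one at a time per round, giving the stated bounds. In fact your write-up is more detailed than the paper's own sketch (which dismisses the per-machine scheduling as "an easy imitation of the process scheduling method on modern operating systems"), and your explicit inbox/outbox swap correctly handles the round-visibility semantics that the paper leaves implicit.
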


\begin{proof}
	First we assign the $P(n)$ processors in MPC model to $p$ machines in EMPC model, and allocate the external memory space for the processors. Each machine is assigned with $P(n)/p$ processors, and the total external memory usage is $M(n)\cdot P(n)$. We assume the space needed to record the state of a processor on external memory is a constant. Now it only needs to simulate $P(n)/p$ processors in MPC on each machine in EMPC, which is an easy imitation of the process scheduling method on modern operating systems. The details are omitted.
\end{proof}

\begin{theorem}\label{thrm:EMPC-PRAM-simulation}
	Any CREW PRAM algorithm using $M(n)$ total memory and $P(n)$ processors can be simulated on EMPC with constant number of $p$ machines, using at least $O((\log{P(n)}+\log{M(n)})$ total main memory and at most $O((\log{P(n)}+\log{M(n)})(P(n)+M(n)))$ total external memory. If the CREW PRAM algorithm runs in $t$ time, then the simulated algorithm on EMPC runs in $2t$ rounds.
\end{theorem}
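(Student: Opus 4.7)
The plan is to mirror the simulation used in Theorem \ref{thrm:EMPC-MPC-simulation}: distribute the $M(n)$ cells of the PRAM shared memory evenly over the external memories of the $p$ EMPC machines, and distribute the $P(n)$ virtual PRAM processors over them as well, so that each machine owns $\Theta(M(n)/p)$ cells and simulates $\Theta(P(n)/p)$ processors. Since every memory address, every processor id, and (by the standard PRAM word-size convention) every cell content can be encoded in $O(\log P(n) + \log M(n))$ bits, storing all $M(n)$ cells plus all $P(n)$ checkpointed processor states fits in $O((\log P(n) + \log M(n))(P(n) + M(n)))$ bits of external memory, which is the claimed upper bound. The lower bound on main memory is the trivial one: a machine must at least be able to hold a single word (address $+$ requester id $+$ cell value) in RAM to forward a request, and one such word already occupies $\Omega(\log P(n) + \log M(n))$ bits.

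Next I would simulate a single PRAM step by two successive EMPC rounds. In the first, the \emph{read round}, each virtual processor pages its state in from external memory, executes the local computation needed to determine which cell it wants to read, and dispatches the request to the owner machine; the owner machine, in the same round's communication phase, reads the cells from its external memory and returns the values to the requesters. In the second, the \emph{write round}, each virtual processor consumes the returned value, finishes the step of its local program, issues its write request to the appropriate owner, and that owner commits the update to external memory. Over $t$ PRAM steps this produces exactly $2t$ EMPC rounds. The CREW discipline is respected for free: concurrent reads cost nothing because an owner simply reads a cell once from disk and broadcasts it to however many requesters asked for it, and the exclusive-write property is inherited from the PRAM program, so the owner never has to arbitrate conflicting writes.

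The memory bookkeeping then falls out by streaming rather than buffering. A machine cycles through its virtual processors one at a time, keeping at most a constant number of words of each kind in RAM at once, so the simulation never needs more than $O(\log P(n) + \log M(n))$ bits of main memory at a time; the rest of the bookkeeping (processor states, pending request lists, response lists, shared cells) all lives on disk, inside the external-memory budget computed above.

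The main obstacle I expect is not the high-level reduction but verifying the round accounting in the presence of heavy fan-in. A cell that is read by many processors generates a large amount of incoming request traffic at its owner, and we must argue that this traffic can still be absorbed, processed, and replied to within a single EMPC round using only $O(\log P(n) + \log M(n))$ main memory by spilling the request queue to external memory and streaming through it. Once that streaming argument is made rigorous, combined with the scheduling of $P(n)/p$ virtual processors on each physical machine exactly as in the proof of Theorem \ref{thrm:EMPC-MPC-simulation}, the $2t$-round bound, the main-memory lower bound, and the external-memory upper bound all follow directly.
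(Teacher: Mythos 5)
Your proposal follows essentially the same route as the paper's proof: partition the shared memory cells and the virtual processors across the $p$ machines' external memories, stream through the request/response records one word at a time to keep main memory at $O(\log P(n)+\log M(n))$, spill all message lists to disk to absorb concurrent-read fan-in, and charge two EMPC rounds per PRAM step. The only cosmetic difference is that the paper sends both read and write requests in the first round and services them (committing writes, returning read values) in the second, whereas you split into a read round and a write round; both organizations yield the same $2t$ bound and the same memory accounting.
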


\begin{proof}
	In this proof we assume that the word length of PRAM and EMPC are constant, and the space needed to record the state of a processor on external memory is also constant. 
	
	To start the simulation, we first assign the processors in PRAM with an unique $pid$ and assign the machines in EMPC with a unique $mid$, then map the $P(n)$ processors to $p$ machines. The mapping is represented as a list of  $<pid, mid>$ records and is known to all machines.

	The shared memory in PRAM is mapped to a section of external memory on EMPC, which is called working external memory for reference. Since the size of the external memory is unlimited on EMPC, the whole section of working external memory can be assigned to a single machine. Yet for a more rigorous proof we partition the working external memory across the $p$ machines. The partition is recorded as a $<addr, mid>$ list and the list is known to all machines.

	Now we propose the simulation algorithm with least possible main memory usage.
	\begin{enumerate}
		\item Each machine reads its $pid$ lists, and for each $pid$, output a record $<W, pid, addr, content>$ or $<R, pid, addr>$ to external memory based on its action. The $W/R$ stands for writing or reading operation. The record is noted as $<W/R, pid, addr, content/\emptyset>$ for simplicity. 
		\item Each machine reads the list of $<W/R, pid, addr, content/\emptyset>$ records, searches the $<addr,mid>$ mapping, and outputs a record $<W/R, pid, mid, addr, content/\emptyset>$ to external memory. The $<W/R, pid, addr, content/\emptyset>$ list can be discarded.
		\item Each machine reads the list of $<W/R, pid, mid, addr, content/\emptyset>$ records. For any record with $mid$ different from the $mid$ of itself, send a message $<W/R, mid', pid, addr, content/\emptyset>$ to machine $mid$, where $mid'$ is the $mid$ of the sender. The messages are written to external memory upon receiving. Note that in this step the communication and IO operation are concurrently executed.
		\item Each machine reads the messages list $<W/R, mid', pid, addr, content/\emptyset>$. For writing operations, the content field in the message is written to the working external memory at address $addr$. For reading operations, the content at address $addr$ in working external memory is read, and a message $<mid',pid, addr, content>$ is sent to machine $mid'$. The messages are written to external memory upon receiving.
		\item Each machine read the message list $<mid',pid, addr, content>$ and change the state of processor $pid$.
	\end{enumerate}

	The correctness of the above simulation process is straightforward. It can be noticed that the simulation consists of 2 rounds of local computation and communication, and thus the round complexity follows.
	
	Next we calculate the size of the  main memory and external memory needed in the simulation.

	\begin{itemize}
		\item Working external memory: $M(n)$ size.
		\item The $pid$ list on each machine: each $pid$ is of length $O(\log{P(n)})$, and the thus all the $pid$'s needs $O(P(n)\log{P(n)})$ external memory.
		\item $<addr, mid>$ mapping: each $addr$ is of length $O(\log{M(n)})$, and each $mid$ is of constant length. Then the total list of $<addr, mid>$ is of size $O(M(n)\log{M(n)})$. Since the $<addr, mid>$ must be duplicated for each machine, the total external memory usage of $<addr, mid>$ mapping is $O(M(n)\log{M(n)}\cdot p)= O(M(n)\log{M(n)})$. 
		\item $<pid,mid>$ mapping:  $O(P(n)\log{P(n)})$. From now on the detailed calculation is omitted and only the result of external memory usage is presented.
		\item $<W/R, pid, addr, content/\emptyset>$ list: $O((\log{P(n)}+\log{M(n)})P(n)  )$.
		\item $<W/R, pid, mid, addr, content/\emptyset>$ list: $O((\log{P(n)}+\log{M(n)})P(n)  )$.
		\item $<W/R, mid', pid, addr, content/\emptyset>$ message list: $O((\log{P(n)}+\log{M(n)})P(n)  )$
		\item $<mid',pid, addr, content>$ message list: $O((\log{P(n)}+\log{M(n)})P(n)  )$
	\end{itemize}

	In summary, the total external memory usage is $O((\log{P(n)}+\log{M(n)})(P(n)+M(n))  )$. A constant $p$ which is the number of machines is hidden.
	
	As for the main memory usage, at any time only one record of each the above list needs to be maintained in main memory. Thus the main memory usage is of size $O((\log{P(n)}+\log{M(n)})$.

	By now the theorem has been proved.
\end{proof}

Here we have several remarks. First, Theorem \ref{thrm:EMPC-MPC-simulation} and \ref{thrm:EMPC-PRAM-simulation} show that the EMPC model acquires equivalent computational power using constant number of machines and unlimited external memory, in comparison with arbitrary number of machines and limited main memory in MPC and PRAM. Second, in the proof of Theorem \ref{thrm:EMPC-PRAM-simulation} we try to use main memory as small as possible, and use external memory as large as possible, and thus the simulation incurs a lot of IO operations. Actually, the external memory usage can be reduced. A simplest way is to use a modular mapping for the $<pid,mid>$ mapping, i.e., map each $pid$ to the machine ($pid$ mod $p$) where $p$ is the number of machines, and thus the $<pid,mid>$ list can be spared. Similar technique can be used on the $<addr, mid>$ mapping. On the other hand, the main memory usage can be raised, and thus the external memory usage and number of IO operations can be reduced. For example, it is reasonable that each machine has enough memory to store all the messages in each round. But note that the size of the main memory must be reasonably low, otherwise the external memory would be meaningless. Finally, we note that the communication and IO operation on EMPC model can be overlapped.

\section{Optimizing IO: The Parallel IO-optimality}\label{sec:optmize-io}
Analog to the definition of work-optimality in classic parallel computation \cite{Kruskal1990}, we here define the parallel IO-optimality on the EMPC model. For a given problem $\mathcal{P}$, denote $IO(\mathcal{A}_P)$ as the total amount of IO operations of a parallelized algorithm $\mathcal{A}_P$ executed on $p$ machines, and $IO(\mathcal{A}_S)$ as the amount of IO operations of a serialized algorithm $\mathcal{A}_S$. We have the following definitions.

\begin{definition}[super-IO-optimal]
	An algorithm $\mathcal{A}_P$ running on the EMPC model is called super-IO-optimal against a serial algorithm $\mathcal{A}_S$ if $IO(\mathcal{A}_P)\le IO(\mathcal{A}_S)$. If $\mathcal{A}_P$ is super-IO-optimal against any serial algorithm, $\mathcal{A}_P$ is called super-IO-optimal for problem $P$.
\end{definition}

\begin{definition}[IO-optimal]
	An algorithm $\mathcal{A}_P$ running on the EMPC model with $p$ machines is called IO-optimal against a serial algorithm $\mathcal{A}_S$ if $IO(\mathcal{A}_P)\le O(1)\cdot IO(\mathcal{A}_S)$. If $\mathcal{A}_P$ is IO-optimal against any serial algorithm, $\mathcal{A}_P$ is called IO-optimal for problem $P$.
\end{definition}

\begin{definition}[non-IO-optimal]
	An algorithm $\mathcal{A}_P$ running on the EMPC model is called non-IO-optimal for problem $P$ if there exists a serial algorithm $\mathcal{A}_S$ such that $IO(\mathcal{A}_P)=\Omega(IO(\mathcal{A}_S))$.
\end{definition}

In this section we will use the Minimum Spanning Tree (MST), Maximum Matching and Sorting as the demonstrative examples and determine the IO-optimality of the state-of-art algorithms for these problems.

\subsection{Minimum Spanning Tree}
The state-of-art parallel algorithm for MST with the lowest round complexity is given in \cite{Nowicki2021}, which is a deterministic algorithm using $O(1)$ rounds on Congested Clique model. By the simulation algorithm given in \cite{Hegeman2015}, the algorithm can be simulated on MPC using also $O(1)$ rounds. 

In this section we will simulate the algorithm given in \cite{Nowicki2021} on EMPC model with $O(n)$ main memory and unlimited external memory, and compare the IO cost of this algorithm to the classical Kruskal algorithm for MST. Note that the $O(n)$ main memory usage is the same with the algorithm in \cite{Nowicki2021}.

The algorithm  given in \cite{Nowicki2021} is split into several sub-procedures. We cite the part related to our following proofs, which is referred as  Algorithm \ref{alg:nowicki-reduce}.

\begin{algorithm}
	\caption{ReduceToSparse-Nowicki}\label{alg:nowicki-reduce}
	\KwIn{a weighted graph G}
	\KwOut{G', a weighted graph with $O(n)$ edges and $O(m+n)$ edges}
	$G^b=(V^b,E^b)\gets$ InitialReduction(G) \;
	Let $V_1^b,\cdots,V_{m/n}^b$ be a partition of the vertices of $V^b$ into disjoint subsets of size $\Theta(n^2/m)$\;
	Partition the edges in such a way that for all $i\le j$ the edges $E_{i,j}^b=\{(u,v)\mid u\in V_i^b,v\in V_j^b, (u,v)\in E^b \}$ are in the memory of a single processor\;
	For all $i\le j$ compute the minimum spanning forest $F_{i,j}^b$ of the graph $(V_i^b\cup V_j^b, E_{i,j}^b)$\;
	Let $G'=(V^b,\bigcup_{i,j}\{edges\; of\; F_{i,j}^b\})$\; 
	Let $V_i'=\sum_{j=(i-1)\sqrt{m/n}+1}^{i\sqrt{m/n}} V_j^b$\;
	Partition the edges in such a way that for all $i\le j$ the edges $E_{i,j}'=\{(u,v)\mid u\in V_i',v\in V_j', (u,v)\in E' \}$ are in the memory of a single processor\;
	For all $i\le j$ compute the minimum spanning forest $F_{i,j}'$ of the graph $(V_i'\cup V_j', E_{i,j}')$\;
	Return the graph $(V^b,\bigcup_{i,j}\{edges\; of\; F_{i,j}'\}))$ with the partition $V_1',V_2'\cdots, V_{\sqrt{m/n}}'$.
	
\end{algorithm}

\begin{theorem}
	If Algorithm \ref{alg:nowicki-reduce} is implemented on the EMPC model with $O(n)$ memory and unlimited external memory, the amount of IO operations is at least $O(m^2/n)$.
\end{theorem}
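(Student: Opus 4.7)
The plan is to prove the lower bound by a two-factor counting argument: count the number of independent MSF sub-problems created by the algorithm, and lower-bound the IO paid by each one. First, I would observe that Step 2 splits the $n$ vertices of $V^b$ into $m/n$ groups of size $\Theta(n^2/m)$, so Step 4 invokes $\binom{m/n}{2}+m/n=\Theta(m^2/n^2)$ independent MSF computations, one per ordered pair of groups. On EMPC there are only $p=O(1)$ physical machines with $O(n)$ main memory each, so these sub-problems cannot all be held in main memory simultaneously: they must be scheduled through main memory in turn, each one loaded from external storage, solved locally, and its output $F_{i,j}^b$ spilled back before the next one begins.

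Next, I would lower-bound the IO per sub-problem by $\Omega(n)$. Because Algorithm \ref{alg:nowicki-reduce} is imported from the MPC design of \cite{Nowicki2021}, Step 3 specifies that $E_{i,j}^b$ together with its endpoints resides in the memory of a single MPC processor of capacity $\Theta(n)$; a faithful EMPC simulation therefore has to fill its $O(n)$ main memory with the union-find on $V_i^b\cup V_j^b$ and the relevant edges, run the local MSF routine, and then write $F_{i,j}^b$ back out. Multiplying the two factors yields a total IO cost of at least $\Theta(m^2/n^2)\cdot\Omega(n)=\Omega(m^2/n)$, matching the statement.

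The main obstacle will be making the per-pair $\Omega(n)$ bound rigorous. A naive averaging argument using $\sum_{i,j}|E_{i,j}^b|=O(m)$ only yields $\Omega(m)$ in total, which is too weak; the extra factor of $m/n$ must come from the rigidity of the MPC memory specification rather than from the actual edge multiplicities. I expect this step will require either a worst-case input construction in which a constant fraction of the $\Theta(m^2/n^2)$ pair sub-problems each carry $\Theta(n)$ working-set bytes, or a precise formalisation of what counts as a \emph{faithful} EMPC simulation of the Nowicki algorithm that forces the processor-memory-sized set-up cost to be paid on every sub-problem. A case split on the regime of $m$ versus $n^{3/2}$ — the threshold at which an individual $|E_{i,j}^b|$ can actually reach $\Theta(n)$ — will likely be needed to cover all densities of the input graph.
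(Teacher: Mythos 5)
Your two-factor count --- $\Theta(m^2/n^2)$ pair sub-problems, each forced through the $O(n)$ main memory --- is essentially the paper's own argument, which phrases the same accounting per edge: because each vertex group participates in $m/n$ sub-problems, each edge is read $m/n$ times, giving $O(m^2/n)$ in total. The ``main obstacle'' you flag (that $\sum_{i,j}|E_{i,j}^b|=O(m)$, so the per-pair $\Omega(n)$ cannot come from actual edge multiplicities) is a legitimate concern, but the paper does not resolve it any more rigorously than you do: it simply asserts that faithfully simulating line 3 forces each $E_{i,j}^b$ to be staged through main memory once per pair, i.e.\ exactly the processor-memory-rigidity argument you sketch, with no worst-case construction and no case split on $m$ versus $n^{3/2}$.
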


\begin{proof}
	The proof will focus on the third line of the cited algorithm: partition the edges in such a way that for all $i\le j$ the edges $E_{i,j}^b=\{(u,v)\mid u\in V_i^b,v\in V_j^b, (u,v)\in E^b \}$ are in the memory of a single processor.
	
	According to the analysis in \cite{Nowicki2021}, the size of each $E^b_{i,j}$ is $O(n)$, and the algorithm intends to store each $E^b_{i,j}$ in the memory of a single processor. This operation is feasible if the number of processor is $O(n)$. However, in EMPC model there are only constant number of $p$ machines. Since the main memory of each machine is $O(n)$, the edges in $E_{i,j}^b$ must be stored in external memory, and when computing $F_{i,j}^b$ the edges in $E_{i,j}^b$ must be read from external memory using IO operations. Since each $E_{i,j}^b$ must be read once for each $1\le i\le j\le m/n$, it can be verified that each edge is read for $m/n$ times, and the total amount of external memory read is $O(m^2/n)$. 
\end{proof}

\begin{theorem}
	To implement the Kruskal algorithm on a single machine with $O(n)$ main memory and unlimited external memory, the amount of IO operations is $sort(m)+m=O(m\log_n{m})$.
\end{theorem}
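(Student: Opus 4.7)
The plan is to follow Kruskal's algorithm step by step and account for the IO cost of each phase, exploiting the fact that the main memory is large enough ($O(n)$) to hold the union-find data structure but not the edge list.

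First, I would recall that Kruskal's algorithm consists of two phases: (i) sort the $m$ edges by weight, and (ii) scan the sorted edge list in order, using a union-find structure to test whether each edge connects two previously disconnected components, adding it to the tree if so. The sorting phase is done entirely in external memory and its IO cost is the standard external sorting bound $\mathit{sort}(m)$. Under the convention of the paper (the block size $B$ is suppressed and main memory is $M=O(n)$), this is $\mathit{sort}(m)=O(m\log_{n} m)$. I would state this as the contribution of the sort phase.

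Next I would analyze the scan-with-union-find phase. The key observation is that the union-find data structure for $n$ vertices (parent/rank arrays, say) occupies $O(n)$ words, which fits entirely within the $O(n)$ main memory allotted to the machine. Consequently, once the sorted edge stream is being read, every union and find operation takes place in main memory and incurs zero additional IO. The only IO cost in this phase is streaming the $m$ sorted edges from external memory, which requires $O(m)$ IO operations (again dropping the $B$ factor as per the model). Summing the two phases yields $\mathit{sort}(m)+m=O(m\log_{n} m)$, which is the stated bound.

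There is no real obstacle here; the argument is essentially a bookkeeping exercise. The only subtlety worth flagging is that the bound actually relies on the union-find table fitting in main memory, so I would make explicit that we are using the paper's simplified $\mathit{sort}(m)=O(m\log_M m)$ with $M=O(n)$, and that the $+m$ term for the final scan is dominated by $\mathit{sort}(m)$ whenever $m=\omega(n)$, so the combined bound $O(m\log_{n} m)$ is tight under this model.
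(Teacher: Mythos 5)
Your proof is correct and follows essentially the same route as the paper's: decompose Kruskal into an external sort of the edges costing $\mathit{sort}(m)=O(m\log_n m)$ plus a linear scan costing $m$ IOs, with the union--find structure residing in the $O(n)$ main memory. Your version is in fact more explicit than the paper's one-line argument, usefully spelling out why the scan phase incurs no IO beyond streaming the edges.
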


\begin{proof}
	The Kruskal algorithm requires sorting the edges according to edge weight, and then a linear scanning of the sorted edges to compute the MST. Then the above results can be immediately derived.
\end{proof}

\begin{theorem}
	The parallel algorithm given in \cite{Nowicki2021} is non-IO-optimal for computing MST.
\end{theorem}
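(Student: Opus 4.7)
The plan is to combine the two preceding bounds directly. The immediately preceding theorem shows that Nowicki's parallel algorithm for MST, when executed on EMPC with $O(n)$ main memory, incurs $\Omega(m^2/n)$ IO operations. The theorem before that shows the classical serial Kruskal algorithm on a single EMPC machine incurs only $O(m\log_n m)$ IO operations. To establish non-IO-optimality it suffices to exhibit one serial witness algorithm $\mathcal{A}_S$ whose IO cost is dominated by that of the parallel algorithm in the sense required by the definition; I would take $\mathcal{A}_S$ to be Kruskal.

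Concretely, I would form the ratio
\[
\frac{IO(\mathcal{A}_P)}{IO(\mathcal{A}_S)} \;=\; \frac{m^2/n}{m\log_n m} \;=\; \frac{m}{n\log_n m}
\]
and argue that on any family of inputs of polynomial density $m = \Theta(n^c)$ with $c>1$, the factor $\log_n m$ is bounded by a constant, so the ratio is $\Theta(m/n)$ and tends to infinity with $n$. In particular, for every constant $C>0$ and every sufficiently large such $n$, $IO(\mathcal{A}_P) > C\cdot IO(\mathcal{A}_S)$, which means no constant multiplier as in the definition of IO-optimality can absorb the gap. Hence, invoking the definition in the negative direction with Kruskal as the witness, the parallel algorithm is non-IO-optimal for MST.

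I do not anticipate any real obstacle in this proof: the two preceding theorems already supply all the quantitative work, and what remains is a short asymptotic comparison. The only subtle point worth flagging is that the witnessing family must be dense enough. For extremely sparse graphs with $m = \Theta(n)$, the two bounds become comparable and the argument above does not separate them; I would therefore be explicit that the claim is an asymptotic statement witnessed on graphs with $m/n \to \infty$, which is the regime of interest for MST on EMPC.
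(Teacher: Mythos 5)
Your proof is correct and takes essentially the same route as the paper, which simply states that the claim follows from $m^2/n=\Omega(m\log_n m)$ together with the definition of non-IO-optimality, using Kruskal as the serial witness. You are in fact more careful than the paper's one-line argument, since you make explicit that the separation only holds in the regime $m/n\to\infty$ and fails for $m=\Theta(n)$, a caveat the paper glosses over.
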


\begin{proof}
	The claim follows from $m^2/n=\Omega(m\log_n{m})$ and the definition of non-IO-optimality.
\end{proof}

\subsection{Maximum Matching}
For maximum matching we choose the algorithm given in \cite{Ghaffari2018} as the demonstrative example. In \cite{Ghaffari2018} the authors first proposed a serialized algorithm, and transformed it into a MPC implementation. We will analyze the performance of the serialized and parallelized algorithm in  \cite{Ghaffari2018} on the EMPC model. For more parallel algorithms to compute maximum matching please refer to the references in \cite{Ghaffari2018}.

\begin{algorithm}
	\caption{MM-Ghaffari-Serialized}\label{alg:MM-Ghaffari-serial}
	\KwIn{unweighted graph $G=(V,E)$}
	\KwOut{an approximated fractional maximum matching}
	For each edge $e\in E$, set $x_e=1/n$\;
	\While{exists unfrozen edges}{
		Freeze each vertex $v$ for which $y_v=\sum_{e\in v}{x_e}\ge 1-2\epsilon$ and freeze all its edges\;
		For each active edges, set $x_e\gets x_e/(1-\epsilon)$\;
		
	}
	Output the values $x_e$ as a fractional matching\;
\end{algorithm}

\begin{lemma}
	Algorithm \ref{alg:MM-Ghaffari-serial} can be implemented such that the number of IO operations equals to the current number of active edges in each iteration.
\end{lemma}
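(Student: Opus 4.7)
The plan is to keep in external memory only the list of currently active edges, and to maintain enough per-vertex bookkeeping in main memory so that no frozen edge is ever reread after the iteration in which it becomes frozen.

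First I would observe a key invariant of the algorithm: at any iteration $t$, every still-active edge has the identical value $x_e = (1/n)(1-\epsilon)^{-t}$. This is because all edges start with $x_e = 1/n$, active edges are scaled by the common factor $1/(1-\epsilon)$ in every iteration, and the instant an edge becomes frozen its value is fixed. Consequently the external-memory list of active edges needs to store only the endpoints; the common $x_e$ is recoverable from $t$ alone.

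Next I would maintain in main memory, for each vertex $v$, three items: a frozen flag, the partial sum $y_v^{\text{frozen}}$ of $x_e$ over already-frozen edges incident to $v$, and the current active degree $d_v$. Since there are $n$ vertices this fits within the $O(n)$ main-memory budget used in \cite{Ghaffari2018}. For any unfrozen $v$ we then have $y_v = y_v^{\text{frozen}} + d_v \cdot (1/n)(1-\epsilon)^{-t}$, so the test $y_v \ge 1-2\epsilon$ is evaluated entirely in main memory and incurs no IO. Each iteration therefore performs exactly one scan of the active-edge list: beforehand I determine the set $F$ of vertices that freeze this round from in-memory data alone; during the scan, for each active edge $(u,v)$ I check $F$, and if either endpoint lies in $F$ I finalize $x_e = (1/n)(1-\epsilon)^{-t}$, add it to $y_{u'}^{\text{frozen}}$ and decrement $d_{u'}$ for any endpoint $u'$ that remains unfrozen, and drop the edge from the rewritten active list; otherwise I keep it. No frozen edge is touched outside its freezing iteration, so the IO cost in iteration $t$ is exactly proportional to the number of edges active at the start of that iteration.

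The main conceptual obstacle is recognizing that the aggregate $y_v$ can be refreshed without rereading any frozen edge, which relies crucially on the uniform-scaling invariant of active edges noted above; without it one would be forced to reread the frozen incidences of $v$ every round. A minor bookkeeping obstacle is that updating $y_v^{\text{frozen}}$ and $d_v$ must be charged somewhere, but each edge triggers such an update exactly once over its entire lifetime (at the moment of freezing), so this accounting is absorbed into the single scan of the active list and does not inflate the per-iteration IO bound.
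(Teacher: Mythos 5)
Your proof is correct, and it reaches the same IO count as the paper but by a noticeably different implementation. The paper's proof stores the edges in external memory as weighted adjacency lists, scans the edges of each still-active vertex every iteration, writes the rescaled weights back to disk, and physically moves newly frozen edges to the tail of each adjacency list so they are never rescanned; the per-iteration IO is then the length of the active prefixes. You instead keep a flat, \emph{weightless} active-edge list and exploit the uniform-scaling invariant $x_e=(1/n)(1-\epsilon)^{-t}$ for all active edges, so that every weight is recoverable from the iteration counter and each vertex's load can be evaluated as $y_v=y_v^{\mathrm{frozen}}+d_v\cdot(1/n)(1-\epsilon)^{-t}$ from $O(n)$ in-memory aggregates. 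What your version buys is that it makes explicit something the paper's proof glosses over: testing $y_v\ge 1-2\epsilon$ requires the contribution of $v$'s already-frozen edges, and without an in-memory partial sum such as your $y_v^{\mathrm{frozen}}$ one would have to reread those frozen edges, breaking the bound; the paper hides this inside ``some auxiliary information'' kept per vertex. Your version also eliminates the write-back of updated weights entirely. The one cosmetic discrepancy, shared with the paper's own argument, is that a read of the active list plus a rewrite of the surviving edges is really at most twice the number of active edges rather than exactly equal to it, so ``equals'' should be read up to a constant factor in both proofs; this does not affect the IO-optimality conclusion the lemma feeds into.
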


\begin{proof}
	We implement the algorithm on a single machine with the following initial status. The $m$ edges are stored in external memory in the form of adjacent list, together with its edge weight initially set to $1/n$. The $n$ vertexes are kept in main memory with some auxiliary information, e.g. active or not. For each vertex we store a pointer to the start position of its corresponding adjacent list. In each iteration the edges of the active vertexes are scanned into main memory. After testing whether the vertex is to be frozen, the weight of active edges are updated and wrote back to external memory. The frozen edges of active vertex are switched to the end of its adjacent list.
	
	According to the above implementation, the number of IO operations in each iteration exactly equals to the number of active edges.
\end{proof}

We only cite the parallelized part of the algorithm given in \cite{Ghaffari2018}. The other codes for serialized processing are omitted.

\begin{algorithm}
	\caption{MM-Ghaffari-Parallelized}\label{alg:MM-Ghaffari-parallel}
	$\cdots$(omitted)\;
	Set \# of machines $m=\sqrt{d}$, \# iterations $I=\frac{\log{m}}{10\log{5}}$\;
	Partition $V'$ into $m$ sets $V_1,\cdots, V_m$ by assigning each vertex to a machine independently and uniformly at random\;
	\For{ each $i\in \{1,\cdots, m\}$ in parallel execute $I$ iterations}{
		For each $v\in V_i$, estimate its weight and freeze it if necessary\;
		For each active edge, update its estimated weight\;
		Increment the total iteration count: $t\gets t+1$;
	}
	Update $d\gets d(1-\epsilon)^I$\;
	Update the legal vertex set $V'$\;
	$\cdots$(omitted)\;	
\end{algorithm}

\begin{lemma}
	If Algorithm \ref{alg:MM-Ghaffari-parallel} is to be implemented on EMPC model with $O(n)$ memory and unlimited external memory, the number of IO operations needed in each iteration equals to the current number of active edges.
\end{lemma}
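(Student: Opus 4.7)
The plan is to lift the implementation used in the previous lemma for the serial Algorithm \ref{alg:MM-Ghaffari-serial} to the constant-$p$ EMPC setting by mapping the $m=\sqrt{d}$ virtual processors onto the physical machines, in the same style as the processor mapping used in Theorem \ref{thrm:EMPC-PRAM-simulation}. First I would fix the data layout: partition $V'$ into $V_1,\dots,V_m$ as the algorithm prescribes, assign each $V_i$ (round robin) to one of the $p$ physical machines, keep the $O(n)$ vertex records (status, pointers, tentative $y_v$) in main memory, and keep the incident edges together with their current weights in the external memory of the owning machine, laid out as an adjacency list whose frozen edges have been swapped to the tail exactly as in the serial proof.

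Next I would count the IOs performed during one iteration. For each active vertex $v$, the machine that owns $v$ scans only the active prefix of its adjacency list, accumulates $y_v$, decides whether $v$ should be frozen, multiplies the weight of every surviving edge by $1/(1-\epsilon)$, and writes it back in place; this spends exactly one IO per active incident edge, which matches what the serial lemma charges. The parallel wrapper only changes \emph{who} performs that scan, not how many edges get scanned, so summing over all $V_i$ and all $p$ machines yields an IO total equal to the current number of active edges.

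The main obstacle is avoiding double counting when an active edge $(u,v)$ has $u\in V_i$ and $v\in V_j$ with $i\ne j$: a naive adjacency list stores the edge on both sides and would double the IO. I would resolve this by fixing a canonical owner per edge (for instance the endpoint with the smaller id) so that only that side actually holds and rewrites the weight in external memory, and by shipping the updated weight or the freeze decision to the other endpoint through the inter-machine communication channel. Because the lemma charges only IO and the model description in Section \ref{sec:model} explicitly allows communication and IO to overlap, this reassignment does not inflate the IO count, and the equality with the current number of active edges is preserved.
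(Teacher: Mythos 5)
Your proposal is correct and follows the intended route: the paper's own proof of this lemma is literally the one-line assertion that it ``can be easily verified,'' with the real content living in the preceding serial lemma, whose external-memory adjacency-list layout (frozen edges swapped to the tail, one read-modify-write per active edge) you lift to the parallel setting exactly as intended. Your extra care about canonical edge ownership to avoid double counting across machines is a detail the paper silently glosses over, and your IO accounting (counting a read plus write-back of an edge as one operation) matches the convention the paper itself uses in the serial case.
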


\begin{proof}
	This can be easily verified  according to Algorithm \ref{alg:MM-Ghaffari-parallel}.
	
\end{proof}

\begin{theorem}
	If implemented on EMPC, the parallelized algorithm \ref{alg:MM-Ghaffari-parallel} is IO-optimal against the serial algorithm \ref{alg:MM-Ghaffari-serial}.
\end{theorem}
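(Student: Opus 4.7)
The plan is to combine the two preceding lemmas directly: each of them pins down the per-iteration IO cost to the current number of active edges, so summing over iterations reduces the theorem to comparing the total active-edge counts accumulated by the two executions.

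First I would observe that both $\mathcal{A}_P$ (Algorithm \ref{alg:MM-Ghaffari-parallel}) and $\mathcal{A}_S$ (Algorithm \ref{alg:MM-Ghaffari-serial}) are instantiations of the same multiplicative-weights fractional-matching scheme of \cite{Ghaffari2018}: both initialize $x_e = 1/n$, and in each iteration freeze every vertex whose dual weight $y_v = \sum_{e \ni v} x_e$ has reached $1 - 2\epsilon$ (along with all its incident edges), then scale the remaining active edges by $1/(1-\epsilon)$. Because the freezing criterion and the weight update are deterministic given the current edge weights, the sequence of active edge sets $E_1 \supseteq E_2 \supseteq \cdots$ traced by the parallel version agrees with the one traced by the serial version up to a constant factor (arising only from the grouping into super-rounds and the random partition across machines used in the parallel implementation).

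Next I would sum the per-iteration IO counts. Writing $a_t$ for the number of active edges at iteration $t$, the lemma for the serial algorithm gives $IO(\mathcal{A}_S) = \sum_t a_t$, while the lemma for the parallel algorithm gives $IO(\mathcal{A}_P) = \sum_t a_t$ over the matching sequence of iterations. Combining this with the preceding paragraph, one obtains $IO(\mathcal{A}_P) \le O(1) \cdot IO(\mathcal{A}_S)$, which is exactly the condition stated in the definition of IO-optimal.

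The main obstacle will be the bookkeeping around the super-round structure of Algorithm \ref{alg:MM-Ghaffari-parallel}: it runs $I = \log(m)/(10\log 5)$ inner iterations before refreshing $d$ and $V'$, and vertices are distributed at random across $m = \sqrt{d}$ machines, so one needs to argue that neither the number of iterations nor the per-iteration active-edge counts inflate by more than a constant factor relative to the serial execution. I would handle this by appealing to the invariants established in \cite{Ghaffari2018}, which show that the super-round batching and the random assignment preserve the progress of the multiplicative-weights process up to constant slack, making the two sums $\sum_t a_t$ comparable within $O(1)$ and closing the argument.
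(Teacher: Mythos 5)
Your proposal is correct and follows essentially the same route as the paper: both arguments combine the two per-iteration lemmas with the observation (delegated to the analysis in \cite{Ghaffari2018}) that the round-compression in the parallel version does not change the total number of iterations or the per-iteration active-edge counts, so the summed IO costs agree up to a constant factor and IO-optimality follows by definition. Your version is somewhat more explicit about where constant-factor slack could enter (the super-round batching and random vertex partition), whereas the paper simply asserts equality, but the substance is the same.
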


\begin{proof}
	Although the parallelized algorithm uses the round compression technique to reduce the number of phases, the total number of iterations remains the same. We have proved that in each iteration, both the parallelized and serial algorithm need a number of IO operations which equals to the current number of active edges, and the number of active edges are the same according to the analysis in \cite{Ghaffari2018}. Thus the number of IO operations of the parallel algorithm equals with the serial algorithm. By the definition of IO-optimality, Algorithm \ref{alg:MM-Ghaffari-parallel} is IO-optimal against Algorithm \ref{alg:MM-Ghaffari-serial}.
\end{proof}

\subsection{Sorting}
For the parallel sorting we use the following Algorithm \ref{alg:terasort} known as TeraSort.

\begin{algorithm}
	\caption{TeraSort}\label{alg:terasort}
	\KwIn{A colloction of unsorted data distributed across $p$ machines}
	\KwOut{Sorted collection of the data}
	\tcc{Phase 1: Sample and split}
	
	Each machine sample the local data by the method given in \cite{Tao2013}\;
	Each machine send the sample to machine $M_1$\;
	$M_1$ sort the collected sample in main memory, and compute the splitters\;
	$M_1$ broadcast the splitters\;
	\tcc{Phase 2: Redistribute and pre-sort}

	The machines redistribute the data according to the splitters, sending the data in the $i$-th interval to machine $M_i$\;
	When receiving data, each machine stores the data in main memory and sort it; If the main memery buffer is full, flush it to external memory\;
	\tcc{Phase 3: Local sort}
	When the redistribution is finished, each machine sort the local data by merging the sorted runs\;
\end{algorithm}

\begin{theorem}
	The TeraSort is super-IO-optimal for the sorting problem.
\end{theorem}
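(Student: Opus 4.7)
The plan is to bracket TeraSort's IO cost between a linear upper bound and the classical external-memory sorting lower bound that every serial algorithm must pay. First I would show that $IO(\text{TeraSort}) = O(n)$, then invoke the Aggarwal--Vitter bound to argue that any serial sorting algorithm on a machine with main memory $M$ requires $\Omega(n \log_M n)$ IO under this paper's convention that the disk block size $B$ is set to $1$. Comparing the two immediately gives $IO(\text{TeraSort}) \le IO(\mathcal{A}_S)$ for every serial $\mathcal{A}_S$, hence super-IO-optimality.

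The upper bound is obtained phase by phase in Algorithm~\ref{alg:terasort}. Phase~1 only reads a small sample from each machine's local data and gathers/broadcasts a negligible amount of splitter information, contributing at most $O(n)$ IO overall. In Phase~2, each machine reads its $n/p$ local elements once in order to ship them to the correct destination, and writes the data it receives back out as sorted runs of size $M$; summed across all $p$ machines this is $O(n)$ reads plus $O(n)$ writes. In Phase~3, under the standard assumption $n/p \le M^{2}$ (so that the number of sorted runs per machine is at most $M$), a single merge pass over the local $O(n/p)$ elements suffices, adding another $O(n)$ IO in total. Combining the three phases yields $IO(\text{TeraSort}) = O(n)$.

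For the lower bound, the Aggarwal--Vitter theorem asserts that any comparison-based external-memory sorting procedure needs $\Omega((n/B)\log_{M/B}(n/B))$ IO, which collapses to $\Omega(n \log_M n)$ once $B=1$ is substituted, matching the IO-counting convention adopted in Section~\ref{sec:model}. Hence every serial $\mathcal{A}_S$ satisfies $IO(\mathcal{A}_S) = \Omega(n \log_M n)$, which asymptotically dominates the $O(n)$ spent by TeraSort; this is precisely the inequality demanded by the definition of super-IO-optimality.

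The main obstacle is the tight IO accounting of Phases~2 and~3. I need the splitters produced by the sampling scheme of \cite{Tao2013} to be balanced enough that no machine receives asymptotically more than $n/p$ elements, otherwise a single machine's local sort could inflate the total IO budget beyond $O(n)$. I also need the assumption $n/p \le M^{2}$ to keep the local merge in a single pass; dropping it adds only a $\log_M(n/(pM))$ factor which is still dominated by $\log_M n$ when $p$ is constant, but the bookkeeping becomes less clean, so the argument is smoothest under this natural assumption.
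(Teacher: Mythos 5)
Your proof follows the same skeleton as the paper's: a phase-by-phase upper bound on TeraSort's IO, the external-memory sorting lower bound $\Omega(n\log_{m}n)$ for any serial algorithm (the paper states it without attribution; Aggarwal--Vitter with $B=1$ is the right citation), and a comparison of the two. The genuine divergence is in Phase~3. You add the assumption $n/p\le m^{2}$ so that the local merge is a single pass, giving the clean headline bound $IO(\text{TeraSort})=O(n)$; the paper makes no such assumption (it only assumes $m=o(n)$ and $p$ constant), keeps the full multi-pass cost $O\bigl(n\log_{m}(n/p)\bigr)$ for Phase~3, and then argues that $m+n\log_{m}(n/p)$ is still below $n\log_{m}n$. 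Your version buys a much simpler comparison ($O(n)$ versus $\Omega(n\log_{m}n)$) at the price of a restricted parameter regime. Your fallback remark for the general case --- that the extra $\log_{m}(n/(pm))$ factor is ``dominated by'' $\log_{m}n$ --- is where care is needed: in that regime the parallel and serial costs have the \emph{same} asymptotic order, and the definition of super-IO-optimality demands the plain inequality $IO(\mathcal{A}_P)\le IO(\mathcal{A}_S)$ rather than an $O(1)$-factor relation (which would only give IO-optimality). So there you must actually verify that $n\log_{m}(n/p)$ plus the $O(n)$ overhead of Phases~1--2 does not exceed $n\log_{m}n$, i.e.\ that the additive slack $n(\log_{m}p+\Theta(1))$ absorbs the overhead --- this is exactly the (somewhat informally executed) logarithmic manipulation in the paper's proof. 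Minor accounting differences: the paper charges Phase~1 only $O(m)$ (random indices are generated in memory and only the sampled records are fetched, per \cite{Tao2013}) and charges Phase~2 zero IO, whereas you charge $O(n)$ to each; your accounting is arguably the more honest one and does not affect the conclusion.
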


\begin{proof}
	In the following proof $n$ denotes the size of the input data to be sorted, and $m$ denotes the size of the main memory of each machine. We assume that $m=o(n)$.
	For Algorithm \ref{alg:terasort}, Phase 1 needs totally $O(|sample|)$ IO operations if the sampling process is conducted by fist generating random numbers in main memory and retrieving the sampled records from external memory. Since the size of the sample fits in the main memory, we have $O(|sample|)=O(m)$. Phase 2 needs 0 IO since all computation occurs in main memory. Phase 3 needs $O(\frac{n}{p}\log_m{\frac{n}{p}}\cdot p)=O(n\log_m{\frac{n}{p}})$ IO operations in expectation. The total IO cost of Algorithm \ref{alg:terasort} is $O(m+n\log_m{\frac{n}{p}})$. On the other hand, the known lower bound of IO operations for sorting on external memory on a single machine is $O(n\log_m{n})$. Now we compare $O(m+n\log_m{\frac{n}{p}})$ and  $O(n\log_m{n})$ as follows.
	$$m+n\log_m{\frac{n}{p}}=n\log_m{((\frac{m}{n})^m\frac{n}{p})}= n\log_m{(o(1) \cdot \frac{n}{p})  }=n\log_m(o(1)\cdot n) $$

	Since the IO cost of parallel sorting, $O(m+n\log_m{\frac{n}{p}})$, is less than $O(n\log_m{n})$, we reach the conclusion that the TeraSort is super-IO-optimal for the sorting problem.
\end{proof}

\subsection{A conclusive remark}
In this section we found that the state-of-art theoretical parallel algorithm for MST with the best round complexity is actually non-IO-optimal. The lesson is that the theoretical algorithms tend to assume that there are many processors, usually as a function of the input size. These processors must be mapped to a constant number of physical machines, which immediately causes severe IO cost.

\section{Optimizing Communication: The Data Redistribution Problem}\label{sec:optimize-comm}
We use sorting as the example to introduce the data redistribution problem. The pseudo code for parallel sorting is already given in Algorithm \ref{alg:terasort}. In this section we pay attention to Phase 2, the redistribution phase. The pseudo code asks the data in the $i$-th interval to be sent to the $i$-th machine. But this may not be optimal in terms of communication cost.
Consider the following extreme case. The data is initially reversely ordered from machine $1$ to $p$, i.e., machine $M_p$ has the smallest data records, and $M_1$ has the largest data records. In this case we only need to reassign the order of the machines to acquire full ordered data. But the algorithm asks to redistribute the data entirely, causing large amount of unnecessary communication.

Thus, we point out an important problem which is ignored before, that is, how to assign the order from $1$ to $p$ to the machines so that the total data communication is minimized. We define the following data redistribution problem (DRP) under the EMPC model.

\begin{definition}[Data Redistribution Problem, DRP]
	Input: A $n\times n$ transmission matrix $T[i,j]$ and a $n\times n$ cost matrix $C[i,j]$. $T[i,j]$ represents the amount of data initially residing in physical machine $i$ to be transferred to virtual machine $j$. $C[i,j]$ represents the cost to transfer one unit of data from physical machine $i$ to $j$.
	
	Output: find a permutation $\pi$ of $[1,n]$, which represents assigning virtual machine $i$ to machine $\pi_i$, such that the following total communication cost is minimized:
	$$\min_{\pi\in \Pi(n)} \sum_{i=1}^n\sum_{j=1}^n T[i,j]C[i,\pi_j] $$
	where $C[i,i]=0$ for $1\le i\le n$.
\end{definition}

\begin{remark}\label{remk:drp-topology-important}
	If the non-zero elements in $C[i,j]$ are all set to $1$, which means that we ignore the non-uniformity of the communication in the cluster and consider the classic MPC model, the problem degenerates to the linear assignment problem which can be solved using Huangarian algorithm in polynomial time. This indicates that it is critical to consider the topology awareness in parallel computation. 
\end{remark}

\begin{remark}\label{remk:drp-qap}
	If we do not consider the initial data distribution, the problem will be the same as the well known quadratic assignment problem (QAP), which is NP-complete \cite{Sahni1976}. QAP can be regarded as assigning $p$ virtual machines to $p$ physical machines, where there are communication tasks between virtual machines and communication costs between physical machines. The difference between QAP and DRP is that in DRP the communication tasks are defined between a physical machine and a virtual machine, which makes DRP a little easier that QAP.
\end{remark}

According to Remark \ref{remk:drp-topology-important} and \ref{remk:drp-qap}, it can be concluded that the DRP is a new problem under the EMPC model whose hardness seems to be sandwiched between the linear assignment problem and the quadratic assignment problem. The following theorem shows that DRP is NP-complete.

\begin{theorem}\label{thrm:drp-np}
	DRP is NP-complete.
\end{theorem}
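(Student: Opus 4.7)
The plan is to establish NP-completeness in the standard two-step way: (i) DRP lies in NP, and (ii) some known NP-hard problem reduces in polynomial time to DRP.

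Membership in NP is straightforward. For the decision version ``is there a permutation $\pi$ with $\sum_{i,j}T[i,j]C[i,\pi_j] \le K$?'', a permutation $\pi$ is itself a polynomial-size certificate, and its cost can be computed in $O(n^2)$ arithmetic operations and compared to $K$ directly.

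For NP-hardness, the natural source problem, suggested by Remark~\ref{remk:drp-qap}, is QAP. Given a QAP instance with flow matrix $F$ and distance matrix $D$ over $n$ vertices, I would construct a DRP instance on $n' = n + k$ machines for a small overhead $k$. A distinguished ``anchor'' block of virtual and physical machines would be introduced, and large-magnitude entries would be placed in the anchor rows of $T$ and the anchor columns of $C$ so that, in any minimum-cost $\pi$, each anchor virtual machine is forced onto a prescribed anchor physical machine. This pins part of the permutation; the remaining coordinates of $\pi$ permute the non-anchor block, and the entries of $T,C$ over this block would be engineered so that the residual cost reproduces the QAP objective on $F,D$, up to a known additive offset. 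As a backup, I would attempt a reduction from 3-Partition, encoding the $3m$ item sizes into a row of $T$ and using a block-structured $C$ that partitions the destination index set into $m$ bins, tuned so that DRP-optimality forces each bin to receive a triple of items summing to exactly the target value.

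The main obstacle is that, for any fixed $T$ and $C$, one can precompute $g(j,k) := \sum_i T[i,j]\,C[i,k]$ and rewrite the DRP objective as $\sum_j g(j,\pi_j)$, which is a linear assignment on $g$ solvable by the Hungarian algorithm. A correct reduction must therefore design $T,C$ whose induced $g$ genuinely inherits the hardness of the source problem: the anchor gadget has to inject the quadratic coupling of QAP (or the partition structure of 3-Partition) into $g$ without collapsing it to a benign matrix. Proving that no unintended permutation undercuts the anchors, while keeping all entries of $T$ and $C$ polynomially bounded in the source instance size, will be the delicate step, and this is where the bulk of the technical work—and the real substance of the proof—would lie.
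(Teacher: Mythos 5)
Your ``main obstacle'' is not an obstacle to be engineered around --- it is a polynomial-time algorithm for DRP, and hence a refutation of the statement itself (unless $\mathrm{P}=\mathrm{NP}$). Because the permutation enters the objective $\sum_{i,j}T[i,j]C[i,\pi_j]$ through only one argument of $C$, the identity $\sum_{i,j}T[i,j]C[i,\pi_j]=\sum_j g(j,\pi_j)$ with $g(j,k)=\sum_i T[i,j]C[i,k]$ holds for \emph{every} instance; $g$ is computable in $O(n^3)$ time and the resulting linear assignment problem is solved exactly by the Hungarian algorithm. No anchor gadget and no encoding of QAP flows or 3-Partition item sizes can ``inject the quadratic coupling into $g$,'' because the collapse is structural: the hardness of QAP comes from the permutation appearing in \emph{both} arguments of the distance matrix ($D[\pi_i,\pi_j]$), and DRP simply does not have that feature. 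So the hardness half of your plan can never be completed (your NP-membership argument is fine but moot), and the correct conclusion is the opposite of the theorem: DRP is in $\mathrm{P}$, and the ``difference'' noted in Remark~\ref{remk:drp-qap} makes the problem not ``a little easier'' than QAP but polynomial-time solvable.

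For comparison, the paper's own proof reduces TSP on a complete bipartite graph to DRP by setting $C[i,j]=w_{i,j}$ and letting $T$ be the adjacency matrix of one fixed $2n$-cycle alternating between the two vertex classes. Your observation explains why that reduction must be unsound: with this fixed $T$, a DRP permutation only decides how the right-hand vertices are interleaved into a tour whose left-hand vertices appear in one prescribed cyclic order, and minimizing over that restricted family of tours is exactly the linear assignment on the induced $g$. Concretely, the paper's claimed equality between the cost of $\pi^*$ and the weight of an arbitrary Hamilton cycle $HC^*$ fails whenever the cycle visits the left vertices in an order other than the one hard-wired into $T$ (which is the generic case for $n\ge 4$), so the forward direction of the reduction is broken; the converse direction is never argued. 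In short, your instinct about where ``the real substance of the proof would lie'' is exactly right, and that substance cannot exist: the statement is false as written, and the honest outcome of your analysis is a polynomial-time algorithm for DRP rather than an NP-completeness proof.
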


\begin{proof}
	We prove the NP-completeness by reducing the problem of TSP on a full bi-party graph (TSP-FB) \cite{Frank1998} to DRP. The definition of TSP-FB is as follows.
	Let $V_1=\{v_{1,1},v_{1,2},\cdots,  v_{1,n}  \}$ and $V_2=\{v_{2,1},v_{2,2},\cdots, v_{2,n}\}$ be two vertex sets with $|V_1|=|V_2|=n$. Let $E=\{(v_{1,i},v_{2,j})\mid 1\le i\le n, 1\le j\le n \}$ be the edge set.  Let $\mathcal{W}: E\rightarrow \mathcal R$ be the edge weights. Then $(V_1,V_2,E,\mathcal{W})$ form a weighted full bi-party graph. Denote $w_{i,j}$ as the weight of edge $(v_{1,i},v_{2,j})$.
	Given a weighted full bi-party graph $G=(V_1,V_2,E,\mathcal{W})$ and a number $k$, decide whether there exists a Hamilton cycle in $G$ with total weight no more than $k$.
	
	For any given instance of TSP-FB, we construct an instance of DRP as follows.
	
	Let $T[i,j]$ and $C[i,j]$ be two matrices with dimension $n$. Let $C[i,j]=w_{i,j}$. Let $T[i,j]$ be as follows.
	
	For odd $n$:
	\begin{equation}
		T[i,j]=\left\{
		\begin{aligned}
			1,\; & j=1+(i+1)\; mod\; n \; or\; j=1+(i-1)\; mod\; n \\
			0,\; & otherwise \\
		\end{aligned}
		\right.
	\end{equation}
	For even $n$:
	\begin{equation}
		T[i,j]=\left\{
		\begin{aligned}
			1,\; & (i=1,\; j=2)\; and\; (i=1,\; j=n-1) \\
			1,\; & (i=2,\; j=1)\; and\; (i=2,\; j=n) \\
			1,\; & (i\ne 1, \; i\; is\; odd)\; and \;(j=1+(i-1)\; mod\; n \; or\; j=1+(i-3)\; mod\; n)\\
			1,\; & (i\ne 2,\; i\; is\; even)\;and \;( j=1+(i-1)\; mod\; n\; or\; j=1+(i+1)\; mod\; n)\\
			0,\; & otherwise
		\end{aligned}
		\right.
	\end{equation}
	
	The formation of the matrix $T$ corresponds to the following graph.
	
	\begin{figure}[H]
		\centering
		\begin{subfigure}{0.45\textwidth}
			\caption{odd case}
			\includegraphics[width=0.3\textwidth]{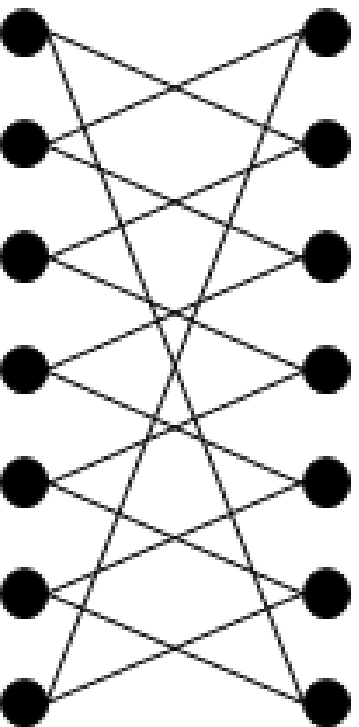}
		\end{subfigure}
		\begin{subfigure}{0.45\textwidth}
			\caption{even case}
			\includegraphics[width=0.3\textwidth]{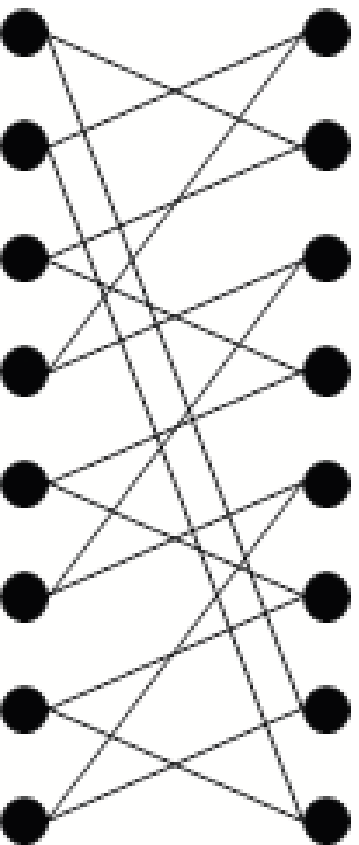}
		\end{subfigure}

	\end{figure}
	
	Suppose there exists an Hamilton cycle $HC^*=(v_{1,i_1}, v_{2,j_1}, v_{1,i_2}, v_{2,j_2},\cdots v_{1, i_n},v_{2,j_n})$ with total weight less than $k$.
	The Hamilton cycle corresponds to the following permutation, for odd and even cases, respectively.
	
	\begin{figure}[H]
		\centering

		\begin{subfigure}{0.45\textwidth}
			\caption{odd case}
			\includegraphics[width=0.77\textwidth]{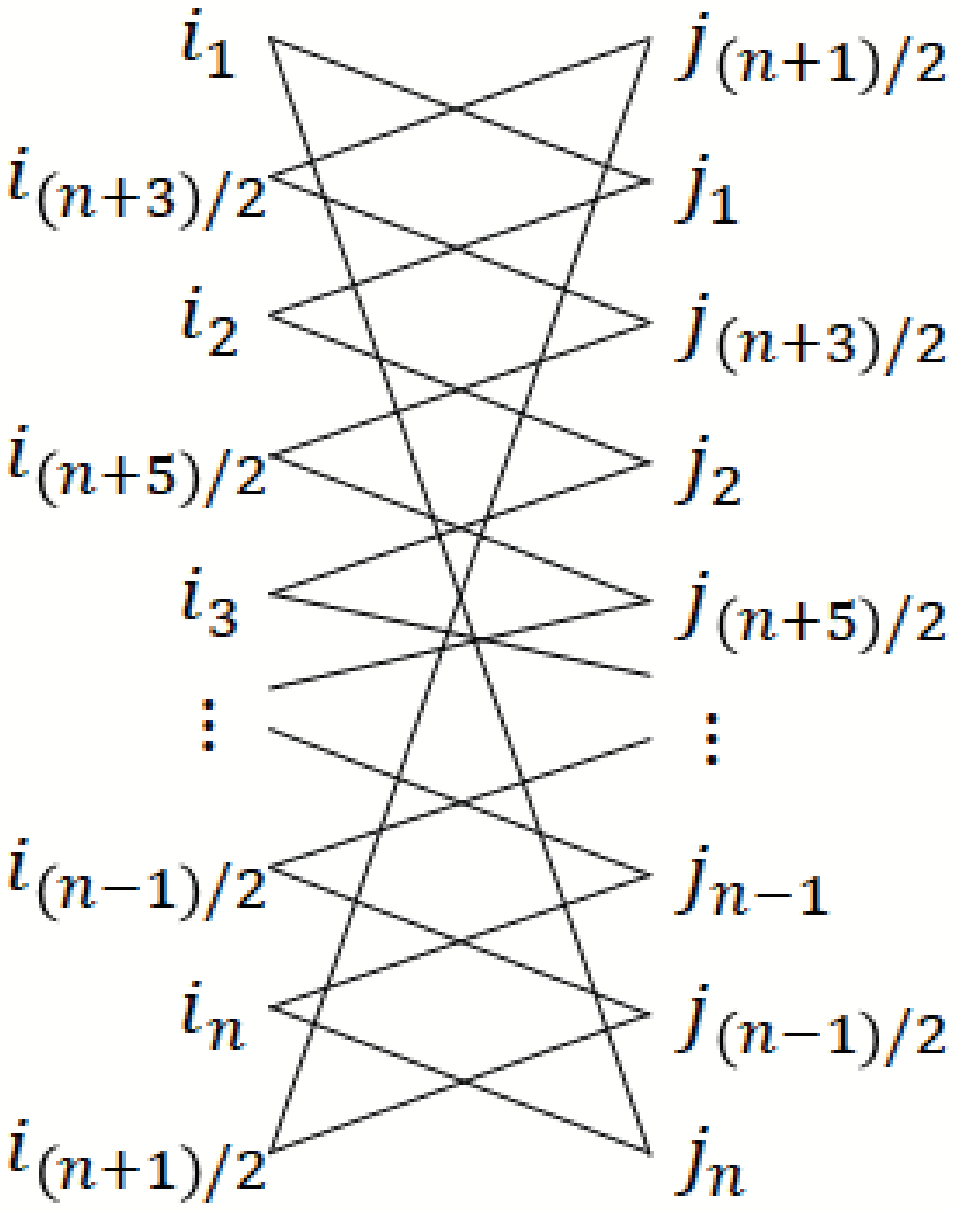}
		\end{subfigure}
		\begin{subfigure}{0.45\textwidth}
			\caption{even case}
			\includegraphics[width=0.7\textwidth]{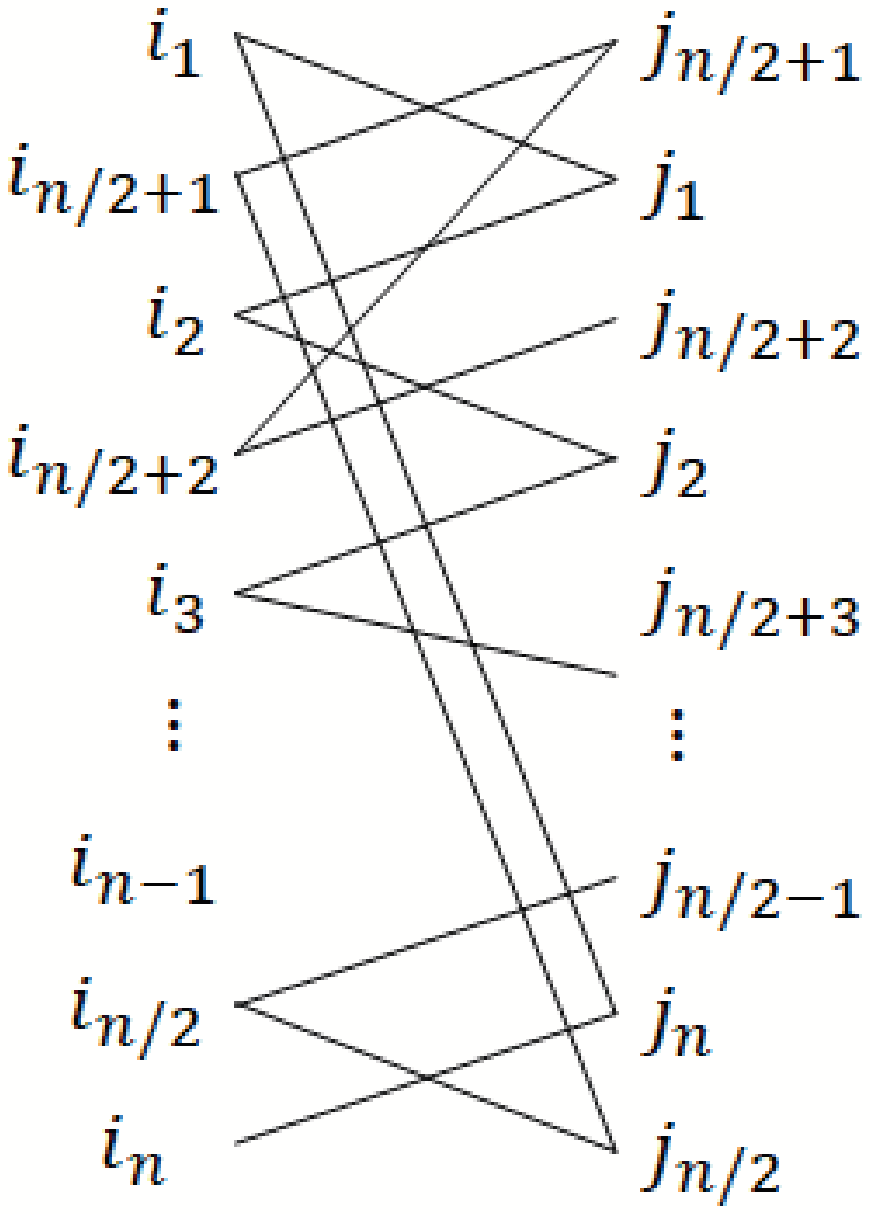}
		\end{subfigure}

	\end{figure}
	
	\begin{equation*}
		\pi^*=\begin{cases}
			\left(
			\begin{array}{ccccccccc}
				i_1,& i_{(n+3)/2},& i_2,& i_{(n+5)/2},&  \cdots,& i_{(n-1)/2},& i_n,& i_{(n+1)/2}\\
				j_{(n+1)/2},& j_1,& j_{(n+3)/2},& j_2,&  \cdots,& j_{n-1},& j_{(n-1)/2},& j_n
			\end{array}
			\right),& if\; n\; is\; odd
			\\
			\left(
			\begin{array}{ccccccccc}
				i_1,& i_{n/2+1},& i_2,& i_{n/2+2},& i_3,& \cdots,& i_{n-1},& i_{n/2},& i_n \\
				j_{n/2+1},& j_1,& j_{n/2+2},& j_2,& j_{n/2+3},& \cdots,& j_{n/2-1},& j_n,& j_{n/2}
			\end{array}
			\right),& if\; n\; is\; even
		\end{cases}
	\end{equation*}
	
	This permutation $\pi^*$ is exactly the one that makes the sum $\sum_{i=1}^n\sum_{j=1}^n T[i,j]C[i,\pi^*_j]$ equal to the weight of the optimal Hamilton cycle $HC^*$.
\end{proof}

Algorithm \ref{alg:drp-approx} gives an approximate algorithm for the DRP problem.

\begin{algorithm}
	\caption{Approximate algorithm for DRP}\label{alg:drp-approx}
	\KwIn{An instance of DRP, with matrix $T[i,j], C[i,j]$.}
	\KwOut{A permutation $\pi\in \Pi(n)$ which minimizes the cost.}
	Solve the linear assignment problem (LAP) with input $T[i,j]$\;
	return the permutation got by the LAP as the approximate solution to DRP\;
\end{algorithm}

The LAP in Algorithm \ref{alg:drp-approx} refers to the following problem: $\min_{\pi\in \Pi(n)} T[i,j]A[i,\pi_j]$ where $A[i,j]=1$ when $i\ne j$ and $A[i,i]=0$ $1\le i,j\le n$.

\begin{theorem}\label{thrm:drp-ratio}
	The approximation ratio of Algorithm \ref{alg:drp-approx} is bounded by  $c_{max}/c_{min}$ where $c_{max}=\max\limits_{1\le i,j\le n}C[i,j], c_{min}=\min\limits_{1\le i,j\le n}C[i,j]$.
\end{theorem}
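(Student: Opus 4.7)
The plan is to exploit the fact that the linear assignment problem (LAP) used by the algorithm is essentially a ``uniform-cost'' version of DRP, so its optimum brackets the DRP optimum up to the spread of the cost matrix $C$. For any permutation $\pi \in \Pi(n)$, denote
\[
\mathrm{DRP}(\pi) = \sum_{i,j} T[i,j]\, C[i,\pi_j], \qquad \mathrm{LAP}(\pi) = \sum_{i,j} T[i,j]\, A[i,\pi_j].
\]
Since $C[i,i]=0$ and $A[i,i]=0$, every term with $\pi_j=i$ vanishes in both sums; thus both objectives reduce to a sum over the index set $S(\pi)=\{(i,j):\pi_j\neq i\}$.

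First I would establish the pointwise sandwich. For each $(i,j)\in S(\pi)$, one has $A[i,\pi_j]=1$ and $c_{\min}\le C[i,\pi_j]\le c_{\max}$ (interpreting $c_{\min}$ as the minimum over off-diagonal entries, since the diagonal zeros are excluded by $S(\pi)$). Summing term by term with the weights $T[i,j]\ge 0$ gives
\[
c_{\min}\cdot \mathrm{LAP}(\pi) \;\le\; \mathrm{DRP}(\pi) \;\le\; c_{\max}\cdot \mathrm{LAP}(\pi)
\]
for every permutation $\pi$.

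Next I would let $\pi^{A}$ be the permutation returned by Algorithm \ref{alg:drp-approx} (optimal for LAP) and $\pi^{*}$ be an optimal permutation for DRP. By definition of $\pi^{A}$ we have $\mathrm{LAP}(\pi^{A})\le \mathrm{LAP}(\pi^{*})$. Chaining this with the two-sided bound above yields
\[
\mathrm{DRP}(\pi^{A}) \;\le\; c_{\max}\cdot \mathrm{LAP}(\pi^{A}) \;\le\; c_{\max}\cdot \mathrm{LAP}(\pi^{*}) \;\le\; \frac{c_{\max}}{c_{\min}}\cdot \mathrm{DRP}(\pi^{*}),
\]
which is exactly the claimed approximation ratio.

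The only real subtlety (and the step I would flag as the main obstacle) is that the diagonal zeros of $C$ make the literal expression $c_{\min}=\min_{i,j}C[i,j]$ equal to $0$, rendering the bound vacuous as written. I would resolve this by noting that the diagonal terms never appear in $\mathrm{DRP}(\pi)$ nor $\mathrm{LAP}(\pi)$, so the relevant quantity is $\min_{i\ne j}C[i,j]$; this should be stated explicitly at the start of the proof. Beyond that, everything reduces to the pointwise inequality and the optimality of $\pi^{A}$ for LAP, and no further work is required.
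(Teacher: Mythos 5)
Your proof is correct and follows essentially the same route as the paper: both arguments sandwich the DRP objective between $c_{min}$ and $c_{max}$ times the LAP objective and then invoke the optimality of the LAP permutation to chain $OPT/c_{min}\ge \mathrm{LAP}(\pi^{*})\ge \mathrm{LAP}(\pi^{A})\ge APR/c_{max}$. Your caveat about the diagonal zeros is well taken and matches the paper's implicit reading of $c_{min}$ as the minimum \emph{non-zero} (off-diagonal) entry of $C$.
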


\begin{proof}
	Denote $OPT$ as the cost of the optimal solution to DRP, which corresponds to a permutation $\pi^D$. Denote $APR$ as the cost of the solution given by Algorithm \ref{alg:drp-approx}, which corresponds to a permutation  $\pi^L$.

	Divide all elements in $C[i,j]$ by $c_{min}$, and the minimum non-zero element becomes $1$. Denote $A[i,j]$ be the matrix that $A[i,j]=1$ when $i\ne j$ and $A[i,i]=0$ $1\le i,j\le n$. Then we have
	
	$$OPT/c_{min}=\sum_{i=1}^n\sum_{j=1}^n T[i,j] C[i,\pi^D_j]/c_{min}\ge \sum_{i=1}^n\sum_{j=1}^n T[i,j] A[i,\pi^D_j] \ge \sum_{i=1}^n\sum_{j=1}^n T[i,j] A[i,\pi^L_j] $$
	
	The last $\ge$ is due to $\pi^D$ is a feasible solution to LAP and $\pi^L$ is the optimal solution to LAP, where LAP is the linear assignment problem with input $T[i,j]$.
	
	On the other hand, divide all elements in $C[i,j]$ by $c_{max}$, and the maximum non-zero element becomes $1$. Thus we have 
	
	$$APR/c_{max}=\sum_{i=1}^n\sum_{j=1}^n T[i,j] C[i,\pi^L_j]/c_{max}\le \sum_{i=1}^n\sum_{j=1}^n T[i,j] A[i,\pi^L_j]$$
	
	Hence, we proved that $OPT/c_{min}\ge ARP/c_{max}\Rightarrow APR/OPT\le c_{max}/c_{min}$.

\end{proof}

\begin{remark}
	The approximation ratio $c_{max}/c_{min}$ is not theoretically good enough. But racall that $c_{max}/c_{min}$ are the ratio between the maximum and minimum machine-to-machine communication cost inside a cluster. Thus this approximation ratio is actually small.
\end{remark}

\section{The general problem: minimizing both communication and IO costs} \label{sec:optimize-comm-io}
In this section we try to solve the general optimization problem on the EMPC model, which is to minimize the total of communication and IO cost.

Let $f_i(i), f_f(i)$ be the amount of data resides initially and finally in machine $i$ for $1\le i\le p$, respectively. Let $T(i,j)$ be the amount of data transferred from machine $i$ to $j$. Denote $C(i,j)$ to be the communication cost from machine $i$ to $j$, and $C_{IO}$ be the IO cost of the machines. Like mentioned in Section \ref{sec:model},  we assume $C_{IO}=1$  for simplicity. Let $F_{IO}(\cdot)$ be a problem-specific function to measure the amount of IO operations. We have the following objective function for the total cost of communication and IO.

$$\min_{f_f(i),\pi\in \Pi(p)}  \sum_{i=1}^{p}\sum_{j=1}^{p}\{T(i,j)C(i,\pi_j)\}+  \max_i F_{IO}(f_f(i))$$
subject to
$$ \sum_{j=1}^p{ T(j,i)}-\sum_{j=1}^p{T(i,j)}=f_f(i)-f_i(i)$$

The above description is not complete since there must be another constraint on $f_f(i)$ to ensure the correctness for a given problem. So it only provides an image for the complexity of optimizing the total cost. In this paper we consider a much simpler case on the Sorting problem. 

\subsection{The general optimization problem on Sorting}
For the case of sorting, the above general optimization problem can be simplified. 

\begin{definition}[General Optimization problem on Sorting, GOP-on-Sorting] \label{def:gop-on-sort}
	The input is a set $S$ of $n$ integers divided into $p$ subsets $S_1=\{s_{1,1},s_{1,2},\cdots, s_{1,n_1} \},\cdots, S_p=\{ s_{p,1},s_{p,2},\cdots, s_{p,n_p}\}$, where $\sum_{i=1}^{p}n_i=n$ and $p>1$. The output is to
	return $p-1$ integers $s^*_1,\cdots s^*_{p-1}\in S$ and a permutation $\pi\in \Pi_p$ such that the following objective function is minimized:
	$$\min_{\pi\in \Pi(p)}\sum_{i=1}^{p}\sum_{j=1}^{p} T[i,j]C[i,\pi_j]+\max_i{ L_i\log{L_i}}   $$
	
	where
	$T[i,j]=|S_i\cap (s^*_{j-1},s^*_j]|, L_i=|S\cap (s^*_{i-1},s^*_i]|$ and $s^*_0=-\infty, s^*_p=\infty$. See Figure \ref{fig:gop-on-sort} for a demonstration.
\end{definition}

\begin{figure}[H]
	\centering
	\includegraphics[width=0.3\textwidth]{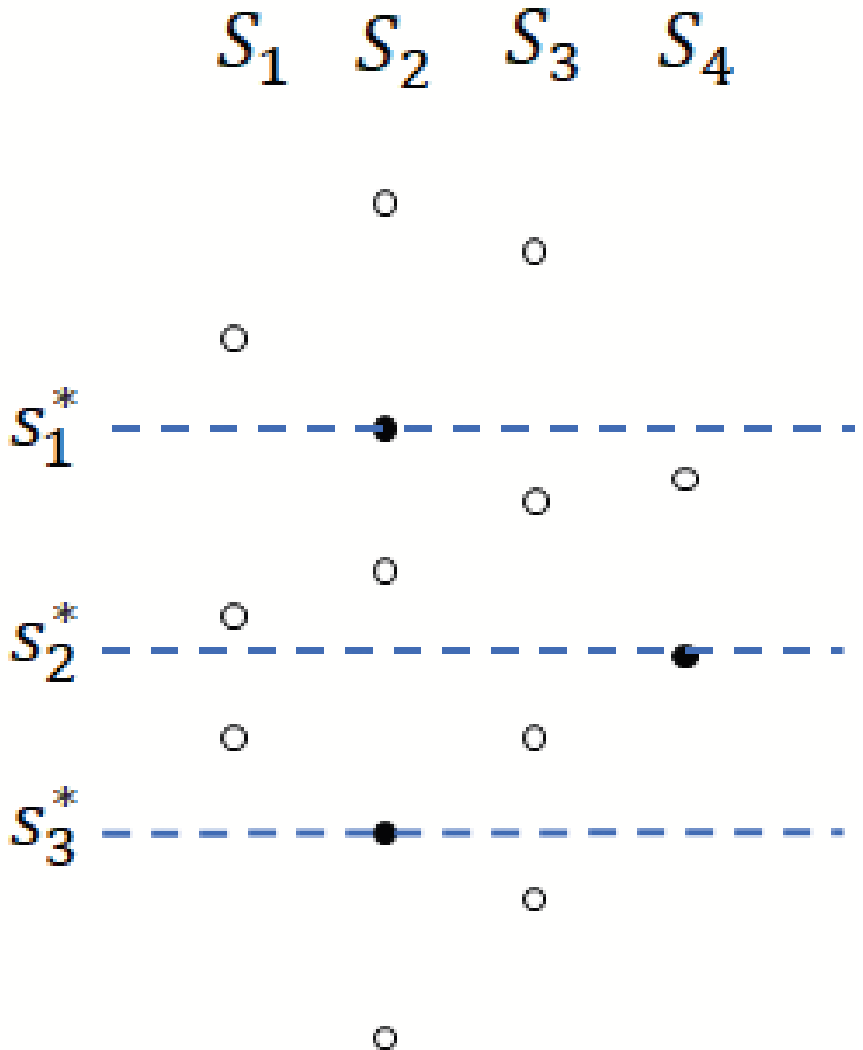}
	\caption{Demonstration of the general optimization problem on Sorting.}\label{fig:gop-on-sort}
\end{figure}

\begin{remark}
	In the last remark in Section \ref{subsec:model-power} we have mentioned that the communication and IO operations can be overlapped. However in Definition \ref{def:gop-on-sort} the total cost is modeled as the addition of communication and IO costs, somewhat ignoring the overlapping of them. We think this simplification is reasonable for a theoretical analysis for minimizing the total cost. The overlapping of the communication and IO operations can be taken into consideration while designing practical algorithms.
\end{remark}
\begin{theorem}
	The problem GOP-on-Sorting defined in Definition \ref{def:gop-on-sort} is in XP.	
\end{theorem}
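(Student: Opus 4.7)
The parameter here is clearly $p$, the number of machines, and XP-membership asks for a running time of the form $n^{f(p)}$. The natural approach is brute-force enumeration of both unknowns in the objective: the splitters $s^*_1,\ldots,s^*_{p-1}$ and the permutation $\pi$. The plan is to show that the splitter choices contribute only a polynomial-in-$n$ factor (with exponent depending on $p$), while the permutations contribute only a function of $p$.

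First I would observe that by the statement of the problem, the splitters are drawn from the input set $S$, so there are at most $\binom{n}{p-1}=O(n^{p-1})$ ordered choices of $(s^*_1,\ldots,s^*_{p-1})$ (after sorting). For each such choice I would sort $S$ once in $O(n\log n)$ time as a preprocessing step; with the sorted order in hand and the splitters identified, the quantities $L_i = |S\cap (s^*_{i-1}, s^*_i]|$ and $T[i,j] = |S_i\cap (s^*_{j-1}, s^*_j]|$ for all $1\le i,j\le p$ can be computed by a single pass in time $O(n\cdot p)$ (or faster with prefix counts). Next, for the fixed splitters I would enumerate all $p!$ permutations $\pi\in\Pi(p)$; evaluating $\sum_{i,j} T[i,j]\,C[i,\pi_j] + \max_i L_i\log L_i$ costs $O(p^2)$ per permutation. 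Thus the cost per splitter choice is $O(p!\cdot p^2 + n\cdot p)$, and the overall running time is
\[
O\!\left(n^{p-1}\cdot\bigl(p!\cdot p^2 + n\cdot p\bigr)\right) \;=\; n^{O(p)},
\]
which is of the required XP form $n^{f(p)}$.

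The main thing to verify carefully, and what I would consider the one non-trivial point, is that restricting the search to splitters from $S$ (as the definition already mandates) does not lose optimality, and that the enumeration is over ordered tuples $s^*_1<\cdots<s^*_{p-1}$ so that the induced intervals $(s^*_{j-1},s^*_j]$ and hence both $T$ and $L$ are well-defined. Once this is in place, the rest is a routine cost count, so I would keep the write-up short: state the enumeration, write the two nested loops (splitters, then permutations), and conclude with the $n^{O(p)}$ bound. There is no real algorithmic obstacle here; the content of the theorem is exactly the observation that the two sources of combinatorial blow-up are both bounded by $n^{O(p)}$, and no cleverness beyond exhaustive search is required to place the problem in XP.
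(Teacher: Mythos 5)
Your proposal is correct and follows essentially the same route as the paper's own proof: enumerate the $O(n^{p-1})$ splitter choices, compute $T[i,j]$ and $L_i$ by a scan, enumerate the $p!$ permutations, and observe the total running time is $n^{O(p)}$, which places the problem in XP. The only difference is bookkeeping detail; no further changes are needed.
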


\begin{proof}
	The problem can be solved as follows: first enumerate all possible $p-1$ splitters from the $n$ inputs, then scan the $n$ inputs to determine $T[i,j]$ and $L_i$, finally enumerate all possible permutations on $[1,p]$ to find the optimal solution. Henceforth, the algorithm takes $O(\tbinom{n}{p}\cdot n\cdot p!)=O(n^{p+1}p!)$ time. By the definition of XP class \cite{Downey1999parameterized}, this problem is in XP (Fixed-Parameter Tractable for Each Parameter).
\end{proof}

We give an simple but efficient approximate algorithm for the general optimization problem on Sorting.

\begin{algorithm}[H]
	\caption{Approximation algorithm}\label{alg:gop-on-sort-approx}
	\KwIn{An instance of the GOP-on-Sorting.}
	\KwOut{a set of $p-1$ splitters, and a permutation $\pi\in \Pi(p)$.}
	Sort the $n$ inputs into an ordered sequence $s_1,\cdots, s_n$\;
	Choose the splitters as $s_{\frac{n}{p}}, s_{2\frac{n}{p}}\cdots, s_{\frac{n}{p}(p-1)}$\;
	Determine $T[i,j]$ and $L_i$ according to the chosen splitters\;
	With input as $T[i,j]$, use Algorithm \ref{alg:drp-approx} to solve the DRP problem and return an permutation $\pi$\;
	Return the splitters and $\pi$ as the output\;
	
\end{algorithm}

\begin{theorem}\label{thrm:gop-ratio}
	If $p2^p\le n$ then the approximation ratio of Algorithm \ref{alg:gop-on-sort-approx} is bounded by $\max \{c_{max}/c_{min},2 \}$.
\end{theorem}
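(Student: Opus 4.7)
My plan is to split $APR=APR_C+APR_{IO}$ and $OPT=OPT_C+OPT_{IO}$ and bound each piece, using the hypothesis $p2^p\le n$ to connect the two. The IO side is easy: uniform splitters give every $L_i=n/p$, so $APR_{IO}=(n/p)\log(n/p)$, and any feasible splitters have $L^*:=\max_i L^*_i\ge n/p$ by pigeonhole; since $p2^p\le n$ forces $n/p\ge 2^p\ge 2$, monotonicity of $x\log x$ on $[1,\infty)$ gives $OPT_{IO}=L^*\log L^*\ge (n/p)\log(n/p)=APR_{IO}$.

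For the communication side I would invoke Theorem \ref{thrm:drp-ratio} on the inner DRP call of Algorithm \ref{alg:gop-on-sort-approx}, getting $APR_C\le(c_{max}/c_{min})\,DRP^*$, where $DRP^*$ is the optimal DRP cost for the uniform-splitter transfer matrix $T$. To bring OPT into play, I would substitute OPT's permutation $\pi^*$ into this instance: the resulting hybrid cost $U=\sum_{i,j}T[i,j]C[i,\pi^*_j]$ upper-bounds $DRP^*$, and an elementwise comparison with $OPT_C=\sum_{i,j}T^*[i,j]C[i,\pi^*_j]$ shows each input contributes equally whenever its uniform and OPT interval indices coincide and differs by at most $c_{max}$ otherwise. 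Hence $U\le OPT_C+Nc_{max}$, where $N$ counts the input elements whose uniform and OPT interval indices differ.

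The crux is absorbing $Nc_{max}$ into OPT. I plan to bound $N$ by $\sum_i|S^*_i-U_i|$, the $\ell_1$ deviation of cumulative OPT splitter positions from the uniform positions, which is in turn bounded by $(p-1)L^*$. The hypothesis $p2^p\le n$ gives $\log(n/p)\ge p$, so $OPT_{IO}=L^*\log L^*\ge pL^*\ge N$, hence $Nc_{max}\le c_{max}\cdot OPT_{IO}$. Plugging back yields $APR\le(c_{max}/c_{min})OPT_C+\bigl(1+c_{max}^2/c_{min}\bigr)OPT_{IO}$, and a case split on whether $c_{max}/c_{min}\ge 2$ should then collapse this to $\max\{c_{max}/c_{min},2\}\cdot OPT$.

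The hardest part will be the structural lemma controlling $N$ and making the constants collapse cleanly in the final case split. If the bound $N\le(p-1)L^*$ is not tight enough to yield exactly $\max\{c_{max}/c_{min},2\}$, the fallback is to sharpen it via a more careful $\pm 1$ walk analysis of the interval-index difference along the sorted sequence, or to exploit the fact that whenever OPT departs far from uniform splitters its IO cost $L^*\log L^*$ grows superlinearly in $L^*$ and therefore provides additional slack to absorb the extra communication incurred by using uniform splitters in the algorithm.
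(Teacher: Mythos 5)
Your overall strategy --- splitting $APR$ and $OPT$ into communication and IO parts, handling the IO part via $L^*\ge n/p$, and using $p2^p\le n$ (i.e.\ $p\le\log\frac{n}{p}$) to absorb an additive communication error into the IO term --- is exactly the paper's. The genuine gap is in the order in which you apply the multiplicative DRP ratio. You first invoke Theorem \ref{thrm:drp-ratio} to get $APR_C\le r\cdot DRP^*$ with $r=c_{max}/c_{min}$, and only afterwards compare $DRP^*$ to $OPT_C$ additively, so the slack $Nc_{max}$ is amplified by $r$. Your own final display, $APR\le r\,OPT_C+\bigl(1+c_{max}^2/c_{min}\bigr)OPT_{IO}$, already exhibits the failure: the coefficient $1+c_{max}^2/c_{min}$ exceeds $\max\{r,2\}$ for essentially every cost matrix (with $c_{min}=1$, $c_{max}=10$ it is $101$ against a target of $10$), and no case split on whether $r\ge 2$ can collapse it. The paper avoids this by reversing the two steps: it introduces $AOPT$, the cost of the \emph{approximation algorithm's} output on the DRP instance built from OPT's splitters, first bounds $APR^{(1)}\le AOPT+DIFF$ additively, and only then applies the ratio in the form $AOPT\le r\cdot OPT^{(1)}$. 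The additive slack $DIFF\le\frac{p-1}{p}n\le\frac{n}{p}\log\frac{n}{p}\le OPT^{(2)}$ therefore enters with coefficient $1$ (combining with $APR^{(2)}\le OPT^{(2)}$ to give the $2$), never multiplied by $r$ or $c_{max}$, and the sum collapses to $\max\{r,2\}\cdot OPT$.

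Two smaller points. First, your structural bound $N\le(p-1)L^*$ is not justified as stated: bounding each cumulative deviation by $|S^*_j-U_j|\le jL^*$ and summing gives $O(p^2L^*)$, not $(p-1)L^*$; but the trivial $N\le n\le pL^*$ already suffices for that sub-step, so this is not where the argument breaks. Second, your additive error carries an explicit factor $c_{max}$ (from comparing per-element costs under different interval assignments), whereas the paper's $DIFF$ is measured purely in units of transferred data via the two LAP objectives. If you restructure along the paper's lines you will still have to decide how that $c_{max}$ is accounted for when comparing the slack against $\frac{n}{p}\log\frac{n}{p}$; this is worth making explicit rather than inheriting silently.
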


\begin{proof}
	
	Let $APR$ and $OPT$ denote cost of the approximate and optimal solution of GOP-on-Sorting, respectively.
	Notice that the objective function is the addition of two parts, one representing the communication cost and the other representing the IO cost. We then divide $APR$ and $OPT$ into two parts, i.e., $APR=APR^{(1)}+APR^{(2)}$ and $OPT=OPT^{(1)}+OPT^{(2)}$.
	Apparently, $APR^{(2)}=\frac{n}{p}\log{\frac{n}{p} }$ and $APR^{(2)}\le OPT^{(2)}$.
	
	Denote $AOPT$ as the approximate solution of the DRP with input obtained using the optimal splitters in the solution of OPT, and denote $r=c_{max}/c_{min}$. Then we have $AOPT\le r\cdot OPT^{(1)}$.
	
	Next we try to 	derive an addictive bound between $APR^{(1)}$ and $AOPT$. Since they are the cost of two instances of LAP (linear assignment problem) obtained using different splitters, we turn to bound the difference between the solution of two arbitrary instances of LAP.
	
	For given $n$ and $p$, for arbitrary two LAP problems obtained by different splitters on the same GOP-on-Sorting problem, the difference of the cost of them is at most $\frac{p-1}{p}n$ for the following reason. The LAP obtained by GOP-on-Sorting problem has the following form:
	\begin{align*}
		&\min_{\pi\in \Pi(p)} \quad T[i,j]C[i,\pi_j] \\
		& \begin{array}{r@{\quad}l@{}l@{\quad}l}
			s.t.& \sum_{i=1}^{p}\sum_{j=1}^{p}T[i,j]=n ,&\\
			&	C[i,j]=1, 1\le i\ne j\le p,&\\
			& C[i,j]=0, 1\le i=j\le p .&\\
		\end{array}
	\end{align*}

	It can be easily verified that: the minimum possible cost is 0, while the $T[i,j]$ matrix $T[i,j]$ is diagonal; and the maximum possible cost is $\frac{p-1}{p}n$, while the input is as $T[i,j]=\frac{1}{p^2}n$, $1\le i,j\le p$. 
	
	By the above analysis, we have the following inequalities.
	\begin{equation*}
		\begin{aligned}
			APR &= APR^{(1)}+APR^{(2)} \\
			&\le AOPT+DIFF+ APR^{(2)} \\
			&\le r\cdot OPT^{(1)}+ \frac{p-1}{p}n+ \frac{n}{p}\log{\frac{n}{p} }\\
			&\le r\cdot OPT^{(1)}+\frac{n}{p}\log{\frac{n}{p} } +\frac{n}{p}\log{\frac{n}{p} }\\
			&\le r\cdot OPT^{(1)}+2\cdot OPT^{(2)}\\
			&\le \max\{r,2\} \cdot (OPT^{(1)}+OPT^{(2)})\\
			&\le \max\{r,2\} \cdot OPT
		\end{aligned}
	\end{equation*}
	
	In the first step we use $DIFF$ to denote the difference between  $APR^{(1)}$ and $AOPT$, and we have proved that $DIFF\le\frac{p-1}{p}n$.
	In the third step we use the inequality $p-1\le \log{\frac{n}{p}}$, which is because of $p2^p\le n\Rightarrow p\le \log{\frac{n}{p}}$.
\end{proof}

\section{Conclusion}\label{sec:conc}
In this paper we proposed the EMPC model for parallel computation considering both IO cost and non-uniform communication cost. We declared and solved three new problems on EMPC model, which are the parallel IO-optimality, the data redistribution problem that minimizes the communication cost, and the general optimization problem that minimizes the total of communication and IO cost. We have shown the hardness of these problems and gave approximate algorithms for them. However, there are a lot of research problems that are not covered by this work, which are discussed in the next section as future works.

\section{Future works}\label{sec:fwork}
\subsection{Communication-optimality}
We have defined the optimality respect to the total IO cost in this paper. Note that parallel computation can be divided into three parts, i.e., computation, communication and IO operation. The optimality on parallel computation is considered by former researches, and the optimality on IO is considered in this work, leaving the optimality on communication unsolved. However, the situation on communication optimality is more complex, since there are a lot of parallel models and each of them leads to different communication complexity. However, we notice that many research works consider the communication \textit{round} complexity, but the communication \textit{amount} complexity is rarely studied. We believe that the communication amount complexity is problem-oriented and model-oblivious. Henceforth, it may be a right direction to use the communication amount complexity to define the communication optimality, and then fills the last puzzle of optimality in parallel computation.

\subsubsection{Data Duplication Problem and Data Pre-distribution Problem}
For the data redistribution problem, we only use sorting as an example. Actually the data redistribution problem represents the class of problems that needs only one round of communication without data duplication, which may be the simplest case. There are other kind of problems, e.g., that need one round of communication but with duplication, and that need multiple rounds of communication. Actually, the parallel join problem in database captures both of the two cases. The Hypercube algorithm  \cite{Afrati2010} is an one-round algorithm which needs to replicate the data tuples. The GYM  algorithm is  \cite{Afrati2017} a multi-round algorithm with only one round of data redistribution. We may define the following two problems: Data Duplication Problem and Data Pre-distribution Problem.

\begin{definition}[Data Duplication Problem]
	For a problem that needs one round of communication and duplication, find a communication plan that guarantees the correct output and  minimize the total communication cost.
\end{definition}

\begin{definition}[Data Pre-distribution Problem]
	For a problem that needs one round of data redistribution and multiple rounds of communication, find a way to pre-distribute the data so that the correct output is guaranteed and the total cost of the following rounds of communication is minimized.
\end{definition}

As we mentioned, both of the above problem can use the parallel join as the example case. Since the data redistribution problem is already  NP-complete, these two problems are even harder. We leave them as future works.

\subsubsection{Minimizing both communication and IO}
For the general optimization problem considering both communication and IO costs, we only considered the specialized  problem on sorting. There are two directions of future work. 
First, when applied to TeraSort and other realistic parallel sorting algorithms the given approximate Algorithm \ref{alg:gop-on-sort-approx} must run on the collected sample (Line 3 of Algorithm \ref{alg:terasort}). Since the sample is another approximation of the total data, the performance of the approximate algorithm respect to the total data must be analyzed too. Second, the general optimization problem can be also applied to other problems. As we have mentioned, the problem on sorting may be the simplest case since it only needs to choose $p-1$ splitters. Combined with the Data Duplicate Problem and Data Pre-Distribution Problem, it will be a great challenge to minimize the total communication and IO costs on other problems.



\bibliography{library}
\end{document}